\newcommand{\gus}[1]{\ensuremath{\mathcal{G}_{(a_{#1},
      \bar{b}_{#1})}}}
\newcommand{\gusi}{\ensuremath{\mathcal{G}_{(1, \bar{1})}}}
\newcommand{\gusz}{\ensuremath{\mathcal{G}_{(0, \bar{0})}}}
\newcommand{\rs}[1]{\ensuremath{\mathcal{R}_{#1}}}
\newcommand{\agus}[1]{\ensuremath{a_{#1}}}
\newcommand{\bgus}[1]{\ensuremath{\bar{b}_{#1}}}
\newcommand{\bt}[1]{\ensuremath{b_{#1}}}
\newcommand{\bgt}[2]{\ensuremath{b_{#1, #2}}}
\newcommand{\SOA}{\stackrel{\mbox{\tiny{SOA}}}{\Longleftrightarrow}}
\newcommand{\lin}[1]{\ensuremath{\mathcal{L}\left(#1\right)}}
\newcommand{\ci}[1]{\ensuremath{\mathcal{T}\left(#1, #1^\prime\right)}}
\newcommand{\se}{\!=\!}
\newtheorem{example}{Example}
\newtheorem{proposition}{Proposition}
\newtheorem{definition}{Definition}
\begin{document}
\title{A Sampling Algebra for Aggregate Estimation}

\numberofauthors{3}

\author{
\alignauthor Supriya Nirkhiwale\\
       \affaddr{University of Florida}\\
       \email{supriyan@ufl.edu}
\alignauthor Alin Dobra\\
       \affaddr{University of Florida}\\
       \email{adobra@cise.ufl.edu}
\alignauthor Christopher Jermaine\\
       \affaddr{Rice University}\\
       \email{cmj4@cs.rice.edu}      
}
 
\maketitle

\begin{abstract}
As of 2005, sampling has been incorporated in all major 
database systems. While efficient sampling techniques are realizable,
determining the accuracy of an estimate obtained from the sample is
still an unresolved problem. In this paper, we present a theoretical
framework that allows an elegant treatment of the problem. We base our
work on generalized uniform sampling (GUS), a class of sampling
methods that subsumes a wide variety of sampling techniques. We
introduce a key notion of equivalence that allows GUS sampling
operators to \emph{commute} with selection and join, and derivation of
confidence intervals. We illustrate the theory through extensive
examples and give indications on how to use it to provide meaningful estimations in
database systems.
\end{abstract}

\section{Introduction}


Sampling has long been used by database practitioners to speed up
query evaluation, especially over very large data sets.  For many
years it was common to see SQL code of the form ``\texttt{WHERE} 
\texttt{RAND() > 0.99}''.
Widespread use of this sort of code lead to the inclusion of the
\texttt{TABLESAMPLE} clause in the SQL-2003
standard \cite{sql2003}. Since then, all major databases have incorporated
native support for sampling over relations.
One such query, using the TPC-H
schema, is:

\label{ex:query}
\vspace{5 pt}
\noindent
\begin{verbatim}
  SELECT SUM(l_discount*(1.0-l_tax))
  FROM lineitem TABLESAMPLE (10 PERCENT), 
       orders TABLESAMPLE (1000 ROWS)
  WHERE l_orderkey  = o_orderkey AND 
        l_extendedprice > 100.0;
\end{verbatim}
\vspace{5 pt}


The result of this query is obtained by taking a Bernoulli sample with
$p=.1$ over \texttt{lineitem} and joining it with a sample obtained 
without replacement (WOR) of size 1000 from \texttt{orders} and
\emph{evaluating} the \texttt{SUM} aggregate. 

In practice, there are two main reasons practitioners write such code.
One is that sampling is useful for debugging expensive queries.
The query can be quickly evaluated over a sample as a sanity check,
before it is unleashed upon the full database.

The second reason is that the  
practitioner is interested in obtaining an idea as to what the actual
answer to the query would be, in less time than would be required to run
the query over the entire database.  This might be useful as a prelude to 
running the query ``for real''---the user might want to see if the result is
potentially interesting---or else the estimate might be used in place of the actual answer.
Often, this situation arises when the
query in question performs an aggregation, since it is fairly intuitive to
most users that sampling can be used to obtain a number
that is a reasonable approximation
of the actual answer.  

The problem we consider in this paper comes from the desire to
use sampling as an approximation methodology.  In this case,
the user is not actually interested in computing an 
aggregate such as ``\texttt{SUM}(\texttt{l\_discount}*(1.0-\texttt{l\_tax)})'' over a sample of the database.
Rather, s/he is interested in estimating the \emph{answer} to such a 
query over the entire database using the sample.  
This presents two obvious problems:
\begin{itemize} 

\item First, what SQL code should the practitioner write in order to compute
an estimate for a particular aggregate?

\item Second, how does the practitioner have any idea how accurate 
that estimate is?

\end{itemize}

Ideally, a database system would have built-in mechanisms that automatically provide
estimators for user-supplied aggregate queries, and that automatically provide
users with accuracy guarantees.  Along those lines, in this paper
we study how to automatically support SQL of the form:

\vspace{5 pt}
\noindent
\begin{verbatim}
CREATE VIEW APPROX (lo, hi) AS
SELECT QUANTILE(SUM(l_discount*(1.0-l_tax)), 0.05) 
       QUANTILE(SUM(l_discount*(1.0-l_tax)), 0.95) 
FROM lineitem TABLESAMPLE (10 PERCENT), 
       orders TABLESAMPLE(1000 ROWS)
WHERE l_orderkey  = o_orderkey AND 
        l_extendedprice > 100.0;
\end{verbatim}

\vspace{5 pt}
Presented with such a query, the database engine will use the user-specified
sampling to
automatically compute two values \texttt{lo} and \texttt{hi} that can be
used as a $[0.05, 0.95]$ confidence bound on the true answer to the
query.  That is, the user has asked the system to compute values
\texttt{lo} and \texttt{hi} such that there is a 5\% chance that the true answer is less than
\texttt{lo}, and there is a 95\% chance that the true answer is less than
\texttt{hi}.  In the general case, the user should be able to specify any 
aggregate over any number of sampled base tables using any sampling scheme, 
and the system would automatically figure
out how to compute an estimate of the desired quantile.  
A database practitioner need have no idea how to compute an estimate for
the answer, nor does s/he need to have any idea how 
to compute confidence bounds; the user only specifies the desired quantiles,
and the system does the rest.

\vspace{5 pt}
\noindent 
\textbf{Existing Work on Database Sampling.}
This is not an easy problem to solve.
While there has been a lot of research on implementing efficient
sampling algorithms \cite{Olken93randomsampling, Lipton1993195}, 
providing confidence intervals for the sample estimate is understood 
only for a few restricted cases. The simplest is when 
only a single relation is sampled.  A slightly more complicated case
was handled by the AQUA system developed at Bell labs \cite{Gibbons98aqua:system, 
Acharya:1999:AAQ, Acharya99joinsynopses, Acharya99aqua:a}.  AQUA considered
\emph{correlated sampling} where a fact table in a star schema is sampled.
These cases are relatively simple because when a single table is sampled,
classical sampling theory applies with a few easy modifications.
Simultaneous work on ripple joins and online aggregation \cite{Haas1996550, Haas_ripplejoins, Haas96large-sampleand, Hellerstein:1997OA, DBLP:conf/sigmod/HaasH01}
extended the class of queries amenable to analysis to include those queries
where multiple tables are sampled with replacement and then joined.

Unfortunately, the extension to other types of sampling is not straightforward,
and to date new formulas have been derived every time a new
sampling is considered (for example, two-table without-replacement sampling
\cite{DBLP:journals/tods/JermaineDAJP06}).
Our goal is to provide a simple theory that makes it possible
to handle very general types of queries over virtually any uniform sampling
scheme: with replacement sampling, fixed-size without replacement sampling,
Bernoulli sampling, or whatever other sampling scheme is used. 
The ability to easily handle arbitrary types of sampling is especially
important given that the current SQL standard allows for a somewhat
mysterious \texttt{SYSTEM} sampling specification, whose exact implementation
(and hence its statistical properties) are left up to the database designers.
Ideally, it should be easy for a database designer to apply our theory to an
arbitrary \texttt{SYSTEM} sampling implementation.

\vspace{5 pt}
\noindent 
\textbf{Generalized Uniform Sampling.}
One major reason that new theory and derivations were previously required for each 
new type of sampling is that the usual analysis is tuple-based, where
the inclusion probability of each tuple in the output set is used as the
basic building block; computing expected values and variances requires
intricate algebraic manipulations of complicated summations.
In previous work, we defined a notion that we called
\emph{Generalized Uniform Sampling} (GUS) \cite{turbodbo} that
subsumes many different sampling schemes (including
all of the aforementioned ones, as well as block-based variants thereof).
In this paper, we
develop an algebra over many common relational operators, as well
as the GUS operator.
This makes it possible to take any query plan that contains one or more
GUS 
operators and the supported relational operators, and perform a statistical
analysis of the accuracy of the result in an algebraic fashion, working from
the leaves up to the top of the plan.

No complicated algebraic manipulations over nested summations are required.
This algebra can form the basis for a lightweight tool for providing estimates
and quantiles,
that should be
easily integrable into any database system.
The database need only feed the tool the user-specified quantiles,
the set of tuples returned
by the query, some simple lineage information over those result
tuples, and the query plan, and the 
tool can automatically compute the desired quantiles.

\vspace{5 pt}
\noindent 
\textbf{Our Contributions.}
The specific contributions we make in this paper are:
\begin{itemize} 
\item 
We define the notion of Second Order Analytical (SOA)-equivalence, a key equivalence relationship
 between query plans that is strong enough to allow quantile analysis but 
weak enough to ensure commutativity of sampling and relational operators.
 
\item
We define the GUS operator that emulates a wide class of sampling methods. 
This operator commutes with most relational operators under  SOA-equivalence.
 
\item We develop an algebra over GUS and relational operators that allows 
derivation of SOA-equivalent plans. These plans easily allow moment calculations that can be used
to estimate quantiles.

\item 
We describe how our theory can be used to add estimation capabilities
to existing databases so that the required changes to the query optimizer and execution
engine are minimal. Alternatively, the estimator can be implemented as
an external tool.
\end{itemize} 

Our work provides a straightforward analysis for the SUM aggregate. It can be easily extended for  COUNT by substituting the aggregated attribute to 1 and applying the analysis for SUM on this attribute. Though the analysis for AVERAGE presents a slightly non-linear case, the analyses for SUM and COUNT lay a foundation for it. The confidence intervals can be derived using a method for approximating probability distribution/variance such as the delta method. The analysis for MIN, MAX and DISTINCT are extremely hard problems to solve due to their non-linearity. For example DISTINCT requires an estimate of all the distinct values in the data and the number of such values. It is thus beyond the scope of this paper.
 
While selections and joins are the highlight of our paper, we show that SOA-equivalence allows analysis for other database operators like cross-product (compaction), intersection (concatenation) and union.  Due to space constraints, we are unable to include all technical proofs, implementation details and discussions. These are available in the the extended version of this paper\cite{}.

The rest of the paper is organized as follows. In Section~\ref{sec:relwork}, we provide a brief 
overview of related work in this area. In Section~\ref{sec:gus}, we introduce GUS methods
and give details on how to get estimates and confidence intervals for them. In 
Section~\ref{sec:analysis}, we introduce the notion of SOA-equivalence between query plans and 
prove that GUS operators commute with a variety of relational operators in the SOA sense. In Section~\ref{sec:properties}, we investigate interactions between GUS 
operators when applied to the same data and explore more possibilities
in using them. In Section~\ref{sec:implementation}, we provide insight on how our theory can be used to implement a separate add-on tool and how to enhance the performance of the variance estimation. In Section~\ref{sec:exp}, we test our implementation thoroughly, and provide accuracy and runtime analysis. We explore some possible applications in Section~\ref{sec:applic} and conclude with a discussion in Section~\ref{sec:conclusion}.

\section{Related Work}\label{sec:relwork}
The idea of using sampling in databases for deriving estimates for a single relation was first studied by 
Shapiro et al. \cite{Piatetsky-Shapiro:1984}. Since then, much research has focused 
on implementing efficient sampling algorithms in databases \cite{Olken93randomsampling, Lipton1993195}.
Providing confidence intervals on estimates for SQL aggregate queries is difficult, which is why 
there has been limited progress in this area. Olken \cite{Olken93randomsampling} studied the problem for 
specific sampling methods for a single relation. This line of work ended abruptly when Chaudhuri 
et al. \cite{DBLP:journals/debu/ChaudhuriM99, DBLP:conf/sigmod/ChaudhuriMN99} proved that extracting IID samples from a join of two relations is infeasible.


Another line of research was the extension to the
\emph{correlated sampling} pioneered by the AQUA system \cite{Gibbons98aqua:system, Acharya99aqua:a, Acharya:1999:AAQ}.
AQUA is applicable to a 
star schema, where the goal is sampling from the fact
  table, and including all tuples in dimension tables that
match selected fact table tuples. 
The AQUA type of sampling has been
incorporated in DB2 \cite{Gryz:2004:QSD:1007568.1007664}.


The reason confidence intervals can be provided for AQUA type sampling
is the fact that \emph{independent identically distributed} (IID)
samples are obtained from the set over which the aggregate is computed. 
A straightforward use of the \emph{central limit theorem}
readily allows computation of good estimates and confidence
intervals. Indeed, it is widely believed \cite{DBLP:journals/debu/ChaudhuriM99, 
DBLP:conf/sigmod/ChaudhuriMN99, Olken93randomsampling, 
Gibbons98aqua:system, Acharya99aqua:a, Acharya:1999:AAQ}
that IID samples at the top of the query plan are required to
provide \emph{any} confidence interval. This idea leads to the search
for a \emph{sampling operator} that commutes with database
operators. This endeavor proved to be very difficult from the
beginning \cite{DBLP:journals/debu/ChaudhuriM99} when joins are involved. To see why this is the
case, consider a tuple $t\in$ \texttt{orders} and two tuples $u_1,u_2$
in \texttt{lineitem} that join with $t$ (i.e. they have the same value
for \texttt{orderkey}). Random selection of tuples $t,u_1,u_2$ in the
sample does not guarantee random selection of result tuples $(t,u_1)$
and $(t,u_2)$.  If $t$ is not selected, neither tuple can exist, and thus
sampling is correlated. A lot of effort \cite{DBLP:journals/debu/ChaudhuriM99, 
DBLP:conf/sigmod/ChaudhuriMN99} has been spent in finding practical 
ways to de-correlate the result tuples with only limited success. 


Progress has been made using a different line of thought by
Hellerstein and Hass \cite{Hellerstein:1997OA} and the generalization in
\cite{Haas:1996:SCE:233502.233514} for the special case of sampling with
  replacement. The problem of producing IID result samples is avoided
by developing central limit theorem-like results for the combination
of relation level sampling with replacement. 
The theory was generalized first to \emph{sampling without 
replacement} for single join queries \cite{DBLP:journals/tods/JermaineDAJP06}, then further
generalized to arbitrary uniform sampling over base directions and
arbitrary \texttt{SELECT-FROM-WHERE} queries without duplicate
elimination in DBO \cite{dbo}, and finally to allow sampling
across multiple relations in Turbo-DBO \cite{turbodbo}. Even though
some simplification occurred through these theoretical developments,
they are mathematically heavy and hard to understand/interpret. 
Moreover, the theory, especially DBO and Turbo-DBO, is tightly coupled 
with the systems developed to exploit it. 


Technically, one major problem in all the mathematics used to analyze 
sampling schemes
is the fact the analyses use functions and summations over tuple
domains, and not the operators and algebras that the database community is
used to. This makes the theory hard to comprehend and apply. The fact
that no database system picked up these ideas to provide a confidence
interval facility is a direct testament of these difficulties. 




\section{Generalized Uniform Sampling}\label{sec:gus}

Previous attempts at accommodating a sampling operator in a query
plan were limited to specific sampling methods. 
In previous work
\cite{turbodbo}, we analyzed a large class of sampling methods for
which the analysis can be unified: Generalized Uniform
Sampling(GUS). Sampling methods such as uniform sampling with/without
replacement, Bernoulli sampling and more elaborate strategies like the
chaining in \cite{turbodbo} are members of the GUS family. Moreover, the
variance of any GUS sampling can be efficiently estimated. We briefly introduce GUS sampling methods in this section and 
investigate them further in this paper. 

\begin{definition}[GUS Sampling \cite{turbodbo}]\label{def:gus}
A randomized selection process $\gus{}$ which gives a sample $\mathcal{R}$ from 
${\bf R} = R_1 \times R_2 \times \cdots \times R_n$ 
is called Generalized Uniform Sampling (GUS) method,
, if, for any given tuples
$t = (t_1,\dots,t_n)\in \bf{R}$
$t^\prime = (t_1^\prime,\dots,t_n^\prime)$, $P(t \in \rs{})$ is independent of $t$, 
and $P(t,t^\prime \in \rs{})$ depends only on $\{i : t_i = t_i^\prime\}$. In such a case, the 
GUS parameters  \agus{}, $\bgus{} = \{\bt{T}|T \subset \{1:n\}\}$ are defined as:
\begin{align*}
\agus{} &= P[t \in \rs{}] \\
\bt{T} &= P[t \in \rs{} \land t^\prime \in \rs{}|
\forall i\in T, t_i = t_i^\prime, \forall j\in T^C, t_j\neq t_j^\prime]. 
\end{align*}
\end{definition}

This definition requires GUS sampling to behave like a
\emph{randomized filter}. In particular, any GUS operator can be
viewed as a selection process from the underlying data, a process that
can introduce correlations. The uniformity of GUS requires that the
randomized filtering is performed on \emph{lineage} of tuples  and not on
the content. 
As simple as the idea is, expressing any sampling process in the form of GUS is a non-trivial 
task. Example~\ref{ex:GUS_join_param} shows the calculation of GUS parameters for a simple case.

\begin{example}\label{ex:GUS_join_param}
In this example we show how the GUS definition above  can be used to characterize the estimation 
necessary for the query from the paper's introduction. We denote by \texttt{l\_s} the Bernoulli sample with $p=0.1$ 
from 
\texttt{lineitem} and by \texttt{o\_s} the WOR sample of size $1000$ from \texttt{orders}. We assume that 
cardinality of \texttt{orders} is $150000$. Henceforth, for ease of exposition, we will denote all base relations 
involved by their first letters. For example, \texttt{lineitem} will be denoted by \texttt{l}. 

Applying the definition above and the independence between sampling processes, we can derive 
the parameters for this GUS as follows:
For any tuple $t \in \texttt{lineitem}$ and tuple $u \in \texttt{orders}$: 
$$
a = P[(t \in \texttt{l\_s}) \land (u \in \texttt{u\_s})] = 0.1 \times \frac{1000}{150000} = 6.667 
\times 10^{-4} 
$$

\noindent
since the base relations are sampled independently from each other. For any tuples $t, t^\prime \in 
\texttt{lineitem}$ and $u, u^\prime \in \texttt{orders}$:
\noindent 
\begin{eqnarray*}
b_{\varnothing}
&=& P[(t, t^\prime \in \texttt{l\_s}) \land (u, u^\prime) \in \texttt{o\_s}]\\ 
&=& 0.1 \times 0.1 \times \frac{1000}{150000} \times \frac{999}{149999}\\ 
&=& 4.44 \times 10^{-7}, 
\end{eqnarray*}
\noindent
and 
\begin{eqnarray*}
b_{o} 
&=& P[t \in \texttt{l\_s}] \times P[t^\prime \in \texttt{l\_s}|t \in \texttt{l\_s}] \times P[u \in 
\texttt{o\_s}] \\
&=& 0.1 \times 0.1 \times \frac{1000}{150000} = 6.667 \times 10^{-5}. 
\end{eqnarray*}
\noindent
Similarly,
\begin{eqnarray*}
b_{l}
&=& P[(t \in \texttt{l\_s}) \land (u, u^\prime \in \texttt{o\_s})]\\
&=& P[t \in \texttt{l\_s}] \times P[u \in \texttt{o\_s}]  \times P[u^\prime \in \texttt{o\_s}|u \in 
\texttt{o\_s}]\\ 
&=& 0.1 \times \frac{1000}{150000} \times \frac{999}{149999}\\
&=& 4.44 \times 10^{-6}. 
\end{eqnarray*} 
 
\noindent
The last term is 
$$
b_{l,o} = P[ (t \in \texttt{l\_s}) \land (u \in \texttt{o\_s})] = 0.1 \times \frac{1000}{150000} = 6.667 
\times 10^{-4}. 
$$
\noindent
Notice that the GUS \emph{captures} the entire estimation process, not
only the two individual sampling methods. The above analysis dealt with a simple join consisting of 
two base relations. For more complex query plans, the derivation of GUS parameters would involve 
consideration of all possible interactions between participating tuples. This will make the analysis 
highly complex. 
\end{example}

The analysis of any GUS sampling method
for a SUM-like aggregate is given as follows. 

\newtheorem{thm}{Theorem}
\begin{thm}\label{thm:ev}\cite{turbodbo}
  Let $f(t)$ be a function/property of $t\in R$, and \rs{} be the sample obtained by a GUS method 
  \gus{}. Then, the aggregate $\mathcal{A}=\sum_{{\bf t} \in R} f({\bf t})$ and the sampling estimate 
  $ X = \frac{1}{a} \sum_{{\bf t} \in \mathcal{R}} f({\bf t})$ have the property: 
  \begin{align}
    E[X] &= \mathcal{A} \nonumber\\
    \sigma^2(X) &= \sum_{S \subset \{1:n\}}\frac{c_S}{a^2}y_S - y_{\phi} \label{gusvariance}
  \end{align}
  with 
  \begin{align*}
    y_S &= \sum_{t_i \in R_i| i \in S} \left(\sum_{t_j \in R_j| j \in S^C}f({t_i, t_j})\right)^2\\
    c_S &= \sum_{T \in \mathcal{P}(n)} \left(-1 \right)^{|T| + |S|} b_T. 
  \end{align*}
\end{thm}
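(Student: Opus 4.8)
The plan is to expand the estimator in Bernoulli indicators and reduce everything to the two quantities defined in Definition~\ref{def:gus}. Write $X=\frac1a\sum_{\mathbf{t}\in R}f(\mathbf{t})\,\mathbf{1}[\mathbf{t}\in\mathcal{R}]$. For the mean, linearity of expectation gives $E[X]=\frac1a\sum_{\mathbf{t}}f(\mathbf{t})\,P[\mathbf{t}\in\mathcal{R}]=\frac1a\sum_{\mathbf{t}}f(\mathbf{t})\,a=\mathcal{A}$, the only ingredient being the GUS property that $P[\mathbf{t}\in\mathcal{R}]=a$ does not depend on $\mathbf{t}$.

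For the variance, I would start from $\sigma^2(X)=E[X^2]-\mathcal{A}^2=\frac1{a^2}\sum_{\mathbf{t},\mathbf{t}'}f(\mathbf{t})f(\mathbf{t}')\,P[\mathbf{t},\mathbf{t}'\in\mathcal{R}]-\mathcal{A}^2$, where the double sum runs over all ordered pairs from $R$, diagonal included. The structural fact from Definition~\ref{def:gus} is that $P[\mathbf{t},\mathbf{t}'\in\mathcal{R}]$ depends on the pair only through its \emph{agreement set} $T=\{i:t_i=t_i'\}$, equalling $b_T$ on pairs whose agreement set is exactly $T$ (the diagonal falls into $T=\{1:n\}$, where $b_{\{1:n\}}=a$). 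Grouping the double sum by agreement set, with $w_T:=\sum_{\{(\mathbf{t},\mathbf{t}')\,:\,\text{agreement set}=T\}}f(\mathbf{t})f(\mathbf{t}')$, gives $a^2\sigma^2(X)=\sum_{T\subseteq\{1:n\}}b_T\,w_T-a^2\mathcal{A}^2$.

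It remains to re-express this through the $y_S$. Expanding the square in the definition of $y_S$ identifies it as $y_S=\sum_{\{(\mathbf{t},\mathbf{t}')\,:\,t_i=t_i'\ \forall i\in S\}}f(\mathbf{t})f(\mathbf{t}')$, the sum of $f(\mathbf{t})f(\mathbf{t}')$ over pairs agreeing on \emph{at least} the coordinates in $S$; in particular $y_\phi=\bigl(\sum_{\mathbf{t}}f(\mathbf{t})\bigr)^2=\mathcal{A}^2$, which supplies the trailing $-y_\phi$. Since ``agrees on at least $S$'' means ``agreement set $\supseteq S$'', we have $y_S=\sum_{T\supseteq S}w_T$, and M\"obius inversion over the subset lattice gives $w_T=\sum_{S\supseteq T}(-1)^{|S|-|T|}y_S$. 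Substituting this into $\sum_T b_T w_T$ and exchanging the order of summation collects, for each $S$, the coefficient $c_S=\sum_{T\subseteq S}(-1)^{|S|+|T|}b_T$; dividing through by $a^2$ and using $\mathcal{A}^2=y_\phi$ then yields the claimed identity~\eqref{gusvariance}.

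The indicator and square expansions are routine; the delicate part is the combinatorics — keeping ``agreement set equals $T$'' distinct from ``agrees on at least $S$'', applying M\"obius inversion in the correct direction, and verifying that the exchange of summations genuinely produces the alternating-sign coefficient $c_S$, with the $S=\{1:n\}$ term and the $\mathbf{t}=\mathbf{t}'$ diagonal consistent via $b_{\{1:n\}}=a$. A useful check while setting this up is to specialize to $n=1$ under Bernoulli or without-replacement sampling and recover the textbook expansion-estimator variance.
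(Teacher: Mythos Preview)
The paper does not actually prove Theorem~\ref{thm:ev}; it is quoted from \cite{turbodbo} and stated without proof, so there is no argument in the paper to compare against. Your indicator expansion, grouping by exact agreement set, and M\"obius inversion over the subset lattice is the standard and correct way to obtain this result.

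One point worth flagging: your derivation yields
\[
c_S \;=\; \sum_{T\subseteq S}(-1)^{|S|+|T|}\,b_T,
\]
with the sum restricted to subsets of $S$, whereas the statement as printed has $c_S=\sum_{T\in\mathcal{P}(n)}(-1)^{|T|+|S|}b_T$, a sum over \emph{all} subsets of $\{1{:}n\}$. Your restricted sum is the correct one. The unrestricted version factors as $(-1)^{|S|}\sum_T(-1)^{|T|}b_T$, a quantity independent of $S$ up to sign, and the $n=1$ Bernoulli check you mention already refutes it: with $a=b_{\{1\}}=p$ and $b_\varnothing=p^2$ the unrestricted formula gives $c_\varnothing=p^2-p$, which produces a spurious $-\mathcal{A}^2/p$ term instead of the textbook $\tfrac{1-p}{p}\sum_t f(t)^2$. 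So your argument is sound; the discrepancy is a typo in the displayed $c_S$ (the range should be $T\subseteq S$), not a flaw in your computation.
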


The above theorem indicates that the GUS estimates of SUM-like aggregates are unbiased and 
that the variance is simply a linear combination of properties of the data, terms $y_S$ 
and properties of the GUS sampling method $c_S$. Moreover, $y_S$ can be estimated from 
samples of \emph{any} GUS \cite{turbodbo}. This result is not asymptotic; it gives the 
exact analysis even for very small samples. Once the estimate and the variance 
are computed, confidence intervals can be readily provided using either
the normality assumption or the more conservative \emph{Chebychev} bound \cite{turbodbo}.


In the rest of the paper, we will study GUS sampling methods in detail.

\section{Analysis of Sampling Query Plans}\label{sec:analysis}

The high-level goal of this paper, is to introduce a tool that
computes the confidence bounds of estimates based on sampling. Given a
query plan with sampling operators interspersed at various points, our
tool transforms it to an\\\emph{analytically equivalent} query plan that
has a particular structure: all relational operators except the final
aggregate form a subtree that is the input to a single GUS sampling
operator. The GUS operator feeds the aggregate operator that produces
the final result. Note that this transformation is done solely for the
purpose of computing the confidence bounds of the result; it does not
provide a better alternative to the execution plan used as input.
Once this transformation is accomplished, Theorem~\ref{thm:ev}
readily gives the desired analysis -- the equivalence ensures that the
analysis for the special plan coincides with the analysis for the
original plan.

A natural strategy to obtain the desired structure is to 
perform multiple local transformations on the original query
plan. These local transformations are based on a notion of analytical
equivalence, that we call Second Order Analytical (SOA) equivalence.  They 
allow both commutativity of relational and GUS operators, and
consolidation of GUS operators. Effectively, these local transformations
allow a plan to be put in the special form in which there is a single
GUS operator just before the aggregate. 

In this section, we first define the
SOA-equivalence and then use it to provide \emph{equivalence} relationships
that allow the plan transformations mentioned. A more elaborate
example showcases the theory in the latter part of the section.

\subsection{SOA-Equivalence}\label{subsec:SOA}

The main reason the previous attempts to design a \emph{sampling
 operator} were not fully successful is the requirement to ensure IID
samples at the top of the plan. Having IID samples makes the analysis 
easy since Central Limit Theorem readily provides confidence intervals. 
However it is too restrictive to allow plans with multiple joins to be dealt
with. It is important to notice that the difficulty is not in 
\textit{executing} query plans containing sampling but in 
\textit{analyzing} such query plans.

The fundamental question we ask in this section is: What is the least
restrictive requirement we can have and still produce useful
estimates?  Our main interest is in how the requirement can
be transformed into a notion of \emph{equivalence}. This will enable us 
to talk about \emph{equivalent plans}, initially, but more usefully
about \emph{equivalent expressions}. The key insight comes from the
observation that it is enough to compute the expected value and
variance of any approximate query plan. Then either the
conservative Chebychev bounds or the optimistic\footnote{While the CLT
  theorem does not apply due to the lack of IID samples, the
  distribution of most complex random variables made out of many
  loosely interacting parts tends to be normal.} normal-distribution
based bounds can be used to produce confidence intervals. Note that 
confidence intervals are the end goal, and, preserving expected value 
and variance is enough to guarantee the same confidence interval using 
both CLT and Chebychev methods.

Thus, for our purposes, \emph{two query plans are equivalent if their result has the same
  expected value and variance}. This equivalence relation between
plans already allows significant progress. It is an extension of the
classic plan equivalence based on obtaining the same answer to
\emph{approximate/randomized} plans. 
From an operational sense, though,
the plan equivalence is not sufficient to provide interesting
characterizations. The main problem is the fact that the equivalence
exists only between complete plans that compute aggregates. It is not
clear what can be said about intermediate results---the equivalent of
non-aggregate relational algebra expressions. 

The key to extend the equivalence of plans to equivalence of
expressions is to first design such an extension for the classic
relational algebra. To this end, assume that we can only use equality
on numbers that are results of SUM-like aggregates but we cannot
directly compare sets. To ensure that two expressions are equivalent,
we could require that they produce the same answer using
\textbf{any} SUM-aggregate. Indeed, if the expressions produce the
same relation/set, they \emph{must} agree on \emph{any} aggregate
computation using these sets since aggregates are deterministic and,
more importantly, do not depend on the order in which the computation is
performed. The SUM-aggregates are crucial for this definition since
they form a vector space. Aggregates $A_t$ that sums function
$f_t(u)=\delta_{tu}$ are the basis of this vector space; agreement on
these aggregates ensures set agreement. Extending these ideas to
randomized estimation, we obtain the following. 
\begin{definition}[SOA-equivalence]
Given (possibly randomized) expressions $\mathcal{E}(R)$ and $\mathcal{F}(R)$, we say
\begin{equation*}
\mathcal{E}(R)\SOA\mathcal{F}(R)
\end{equation*}
if for \textbf{any arbitrary} SUM-aggregate $\mathcal{A}_f(S)=\sum_{t\in S} f(t)$
\begin{align*}
E[ \mathcal{A}_f( \mathcal{E}(R) ) ]  &= E[ \mathcal{A}_f( \mathcal{F}(R) ) ]  \\
Var[ \mathcal{A}_f( \mathcal{E}(R) ) ]  &= Var[ \mathcal{A}_f( \mathcal{F}(R) ) ].  \\
\end{align*}
\end{definition}

From the above discussion, it immediately follows that
SOA-equivalence is a generalization and implies set equivalence for
non-randomized  expressions, as stated in the following proposition. 
\begin{proposition} Given two relational algebra expressions $E(R)$
  and $F(R)$ we have:
  \begin{equation*}
    E(R)= F(R) \Leftrightarrow E(R) \SOA F(R)
  \end{equation*}
\end{proposition}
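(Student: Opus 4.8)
The plan is to prove the two directions separately, with the forward direction ($E(R) = F(R) \Rightarrow E(R) \SOA F(R)$) being essentially immediate and the converse ($E(R) \SOA F(R) \Rightarrow E(R) = F(R)$) carrying the real content. For the forward direction, I would observe that if $E(R)$ and $F(R)$ denote the same relation (set of tuples) for every instance $R$, then for any function $f$ the aggregate $\mathcal{A}_f(E(R)) = \sum_{t \in E(R)} f(t)$ and $\mathcal{A}_f(F(R)) = \sum_{t \in F(R)} f(t)$ are literally the same deterministic number; hence their expectations agree and their variances are both zero. So $E(R) \SOA F(R)$ holds trivially.

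For the converse, the strategy is exactly the one sketched in the paragraph preceding the proposition: use the "indicator" aggregates as a separating family. Fix an arbitrary relation instance $R$. For each candidate tuple $t$ (ranging over the appropriate tuple domain), define $f_t(u) = \delta_{tu}$, the Kronecker delta, and consider the aggregate $\mathcal{A}_{f_t}(S) = \sum_{u \in S} \delta_{tu}$. For a set $S$ (no duplicates), this aggregate evaluates to $1$ if $t \in S$ and $0$ otherwise, i.e. it is the membership indicator $\mathbf{1}[t \in S]$. Since $E(R)$ and $F(R)$ are deterministic relational algebra expressions, $\mathcal{A}_{f_t}(E(R))$ is a constant, so $E[\mathcal{A}_{f_t}(E(R))] = \mathbf{1}[t \in E(R)]$, and likewise for $F$. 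The hypothesis $E(R) \SOA F(R)$ applied to $f = f_t$ then gives $\mathbf{1}[t \in E(R)] = \mathbf{1}[t \in F(R)]$, i.e. $t \in E(R) \iff t \in F(R)$. Since this holds for every tuple $t$, we conclude $E(R) = F(R)$ as sets, and since $R$ was arbitrary the two expressions are equal.

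The one genuine subtlety — and the step I would be most careful about — is the handling of \emph{duplicates} / bag semantics. The informal discussion says SUM-aggregates "do not depend on the order in which the computation is performed," which is fine, but if relations are allowed to be multisets then $\mathcal{A}_{f_t}(S)$ counts the multiplicity of $t$ in $S$ rather than just membership, and one must argue that matching all multiplicities still forces $E(R) = F(R)$ as bags; this actually works and is even cleaner, since the indicator aggregates now recover the full multiplicity function. If instead the intended semantics is set semantics (which the surrounding text on duplicate elimination suggests is the relevant case here), then I would simply note explicitly that relational algebra expressions here denote sets, so membership indicators suffice and the separating family $\{f_t\}$ does its job. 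Either way, the crux is recognizing that the family of delta-function aggregates is \emph{complete} — it separates any two distinct (multi)sets — which is precisely the vector-space basis remark made just before the proposition; everything else is bookkeeping.
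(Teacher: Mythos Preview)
Your proposal is correct and follows essentially the same approach as the paper: the paper states that the result ``immediately follows'' from the preceding discussion, which is exactly your argument --- use the indicator aggregates $f_t(u) = \delta_{tu}$ as a separating family, so that agreement on expected values of all SUM-aggregates forces agreement on membership of every tuple, hence set equality. Your write-up simply makes explicit the deterministic-case bookkeeping (variances are zero, expectations equal the values themselves) that the paper leaves implicit, and your remark on bag versus set semantics is a reasonable elaboration but not something the paper addresses.
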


The next proposition establishes that SOA-equivalence is indeed an \emph{equivalence relation} 
and can be manipulated like relational equivalence. 
\begin{proposition}
SOA-equivalence is an equivalence relation, i.e., for any expressions $\mathcal{E},\mathcal{F},\mathcal{H}$ and relation
$R$:
\begin{gather*}
  \mathcal{E}(R) \SOA \mathcal{E}(R) \\
  \mathcal{E}(R) \SOA \mathcal{F}(R) \Rightarrow \mathcal{F}(R) \SOA
  \mathcal{E}(R) \\
  \mathcal{E}(R) \SOA \mathcal{F}(R) \land \mathcal{F}(R) \SOA
  \mathcal{H}(R)
  \Rightarrow \mathcal{E}(R) \SOA \mathcal{H}(R). 
\end{gather*}
\end{proposition}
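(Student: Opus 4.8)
The plan is to observe that SOA-equivalence is nothing more than a universally quantified conjunction of equalities between real numbers, and that ordinary equality on $\mathbb{R}$ is itself reflexive, symmetric, and transitive. Since these three properties are preserved both under forming conjunctions and under universal quantification over the aggregating function $f$, the proof reduces to unfolding the definition and invoking the corresponding property of $=$ on the reals three times. So I would fix an arbitrary SUM-aggregate $\mathcal{A}_f$ and treat $E[\mathcal{A}_f(\mathcal{E}(R))]$ and $Var[\mathcal{A}_f(\mathcal{E}(R))]$ as (well-defined, finite) real numbers associated with the expression $\mathcal{E}$, then argue pointwise in $f$.

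For reflexivity, $\mathcal{E}(R)\SOA\mathcal{E}(R)$ holds because for every $f$ we trivially have $E[\mathcal{A}_f(\mathcal{E}(R))]=E[\mathcal{A}_f(\mathcal{E}(R))]$ and $Var[\mathcal{A}_f(\mathcal{E}(R))]=Var[\mathcal{A}_f(\mathcal{E}(R))]$, which is exactly the defining condition. For symmetry, assume $\mathcal{E}(R)\SOA\mathcal{F}(R)$: by definition, for every $f$, $E[\mathcal{A}_f(\mathcal{E}(R))]=E[\mathcal{A}_f(\mathcal{F}(R))]$ and $Var[\mathcal{A}_f(\mathcal{E}(R))]=Var[\mathcal{A}_f(\mathcal{F}(R))]$; swapping the two sides of each equation (symmetry of $=$) and re-quantifying over $f$ gives precisely the defining condition for $\mathcal{F}(R)\SOA\mathcal{E}(R)$. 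For transitivity, given $\mathcal{E}(R)\SOA\mathcal{F}(R)$ and $\mathcal{F}(R)\SOA\mathcal{H}(R)$, I would fix $f$, chain the two expectation equalities (and, separately, the two variance equalities) through the common middle term involving $\mathcal{F}$ using transitivity of $=$ on $\mathbb{R}$, and then universally quantify over $f$ to conclude $\mathcal{E}(R)\SOA\mathcal{H}(R)$.

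There is essentially no obstacle here; the only point worth stating explicitly is the standing assumption that, for every $f$, the relevant expectations and variances exist and are finite, so that the quantities being compared are genuine real numbers. This is implicit in the definition of SOA-equivalence, and in the randomized case it holds whenever the sampled relations and $f$ are such that the aggregate has finite second moment; with that caveat in place, all three parts are immediate.
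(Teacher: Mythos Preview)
Your argument is correct. The paper does not actually include a proof of this proposition; it is stated and then immediately used, the authors evidently regarding it as self-evident from the definition. Your approach---unfold the definition, observe that it is a universally quantified conjunction of real-number equalities, and inherit reflexivity, symmetry, and transitivity from $=$ on $\mathbb{R}$---is precisely the intended (and only reasonable) route, so there is nothing to compare.
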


SOA-equivalence subsumes relational algebra equivalence. The strength of
 SOA-equivalence is the fact that it does not depend on a notion of randomized set
equivalence, an equivalence that would be hard to define especially if it
has to preserve aggregates. 

\begin{proposition}\label{thm:SOA-P} Given two relational algebra expressions $\mathcal{E}(R)$
  and $\mathcal{F}(R)$ we have:
  \begin{gather*}
    \mathcal{E}(R) \SOA \mathcal{F}(R)\\
    \Leftrightarrow\\
    \forall t \in R, \ \ P[t \in \mathcal{E}(R)] = P[t \in \mathcal{F}(R)] \; \mbox{ and }\\
    \forall t,u \in R, \ \  P[t,u \in \mathcal{E}(R)] = P[t,u \in \mathcal{F}(R)] \; 
  \end{gather*}
\end{proposition}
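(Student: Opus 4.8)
The plan is to prove both directions by connecting the family of SUM-aggregates $\mathcal{A}_f$ to the indicator random variables $\mathbf{1}[t \in \mathcal{E}(R)]$ and the products $\mathbf{1}[t \in \mathcal{E}(R)]\,\mathbf{1}[u \in \mathcal{E}(R)]$. The crucial observation (already hinted at in the discussion preceding the SOA-equivalence definition) is that $\mathcal{A}_f(\mathcal{E}(R)) = \sum_{t \in R} f(t)\,\mathbf{1}[t \in \mathcal{E}(R)]$, so that
\begin{align*}
E[\mathcal{A}_f(\mathcal{E}(R))] &= \sum_{t \in R} f(t)\, P[t \in \mathcal{E}(R)], \\
E[\mathcal{A}_f(\mathcal{E}(R))^2] &= \sum_{t,u \in R} f(t) f(u)\, P[t,u \in \mathcal{E}(R)],
\end{align*}
and $Var$ is recovered from these two. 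Thus the two moments of $\mathcal{A}_f$ are \emph{linear functionals} of the first- and second-order inclusion probabilities, with coefficients controlled by $f$.

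For the $(\Leftarrow)$ direction: if the inclusion probabilities agree pointwise for $\mathcal{E}$ and $\mathcal{F}$, then term by term the sums above are identical for every choice of $f$, so $E[\mathcal{A}_f(\mathcal{E}(R))] = E[\mathcal{A}_f(\mathcal{F}(R))]$ and $E[\mathcal{A}_f(\mathcal{E}(R))^2] = E[\mathcal{A}_f(\mathcal{F}(R))^2]$; subtracting squared means gives equality of variances. This direction is essentially bookkeeping.

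For the $(\Rightarrow)$ direction I would use the basis argument sketched in the text: instantiate $f$ at the ``delta'' functions. Taking $f = \delta_t$ (i.e. $f(u) = 1$ if $u=t$, else $0$) collapses $E[\mathcal{A}_f(\mathcal{E}(R))]$ to exactly $P[t \in \mathcal{E}(R)]$, so equality of expectations forces $P[t \in \mathcal{E}(R)] = P[t \in \mathcal{F}(R)]$ for every $t$. For the second-order probabilities, take $f = \delta_t + \delta_u$ for a fixed pair $t \neq u$; then $\mathcal{A}_f(\mathcal{E}(R)) = \mathbf{1}[t \in \mathcal{E}(R)] + \mathbf{1}[u \in \mathcal{E}(R)]$, whose second moment is $P[t \in \mathcal{E}(R)] + P[u \in \mathcal{E}(R)] + 2\,P[t,u \in \mathcal{E}(R)]$ (using that an indicator equals its own square). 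Since the SOA-equivalence hypothesis gives equality of both the mean and the variance — hence of the second moment — of this particular $\mathcal{A}_f$, and we have already matched the first-order terms, we can cancel and conclude $P[t,u \in \mathcal{E}(R)] = P[t,u \in \mathcal{F}(R)]$. The case $t = u$ is subsumed by the first-order statement.

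The main subtlety — what I expect to be the only real obstacle — is a careful treatment of whether the delta functions $\delta_t$ are admissible choices of $f$ and whether $\mathcal{A}_f$ ranges over a space rich enough that pointwise agreement of the probability coefficients is \emph{forced} rather than merely sufficient. This is exactly the ``SUM-aggregates form a vector space with the $A_t$ as a basis'' remark in the preamble: one must note that $R$ is finite (or at least that $f$ may be taken finitely supported), so each $\delta_t$ and each pairwise sum $\delta_t + \delta_u$ is a legitimate $f$, and that testing against this finite basis suffices. Once that is granted, both directions follow from the linearity of expectation and the identity $\mathbf{1}[\cdot]^2 = \mathbf{1}[\cdot]$, with no computation beyond expanding a square.
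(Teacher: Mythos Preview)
Your proposal is correct and matches the paper's own proof essentially step for step: the paper also uses $f_t(s)=1_{\{s=t\}}$ to extract the first-order probabilities from equality of expectations, then $f_{t,t'}(s)=1_{\{s=t\}}+1_{\{s=t'\}}$ together with the expansion of the second moment to extract $P[t,t'\in\mathcal{E}(R)]$, and handles the reverse direction via the identities $E[\mathcal{A}_f(S)]=\sum_t f(t)P[t\in S]$ and $E[\mathcal{A}_f(S)^2]=\sum_{t,t'} f(t)f(t')P[t,t'\in S]$. Your added remark about admissibility of the delta functions is the only extra content, and it is a fair caveat rather than a gap.
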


Proposition~\ref{thm:SOA-P} provides a powerful alternative to
SOA-equivalence. This equivalence is in terms of first and second order probabilities, and we refer to it as \emph{SOA-set equivalence}. Another way to interpret the result above is that SOA-set equivalence
is the same as agreement on all SUM-like aggregates. More importantly for this
paper, SOA-set equivalence provides an alternative proof technique to
show SOA-equivalence. Often, proofs based on SOA-set equivalence are
simpler and more compact.

Section~\ref{sec:implementation} contains a recipe for expected value and variance
computation for a specific situation, when there is a single overall
GUS sampling on top.  Starting with the given query plan that contains
both sampling and relational operators, if we find a SOA-equivalent
plan that is equivalent and has no sampling except a GUS at the top, we
readily have a way to compute the expected value and variance of the
original plan. In the rest of this section we pursue this idea further and
show how SOA-equivalent plans with the desired structure can be obtained from a general 
query plan.

\subsection{GUS Quasi-Operators}\label{sec:gus-op}

Except under restrictive circumstances, the sampling operators will
not commute with relational operators. This, as we mentioned is the
main reason previous work made limited progress on the issue. As we
will see later in this section, GUS sampling 
does commute in a SOA-equivalence
sense with most relational operators. The reason we can commute GUS (but not
specific sampling methods) is that, due to its generality, it
can \emph{capture} the correlations induced by the relational
operators. The first step in our analysis has to be a
\emph{translation} from specific sampling to GUS-sampling. 

\begin{figure} \centering
\label{tbl:bwor}
\scalebox{1}{
\begin{tabular}{c|c}
	Sampling method & GUS parameters \\ \hline
	Bernoulli($p$) & $\agus{} = p$, $\bt{\varnothing} = p^2$, \bt{\texttt{R}} $= p$\\
	WOR ($n$, $N$) & $\agus{} = \frac{n}{N}, \bt{\varnothing} = \frac{n(n-1)}{N(N-1)}, \bt{\texttt{R}} = \frac{n}{N}$
\end{tabular}}
\caption{{\small GUS parameters for known sampling methods on a single relation}}
\end{figure}

Before we talk about the translation from sampling to GUS operators, we
need to clarify and refine the Definition~\ref{def:gus} of GUS
sampling. As part of the definition, terms of the form
$t_i=t_i^\prime$ or $t_j\neq t_j^\prime$ are used. The meaning of
these terms is somewhat fuzzy in both \cite{dbo} and
\cite{turbodbo}. Intuitively, they capture the idea that tuples (or
parts) are the same or different. Since in this paper we will have
multiple GUS operators involved, it is important to make the meaning
of such terms very clear. We do this through a notion that proved
useful in probabilistic databases (among other uses):
\emph{lineage}\cite{provenanceindb}. Lineage allows dissociation of  the ID of
a tuple from the content of the tuple, for base relation, and tracking
the composition of derived tuples. With this, $t_i=t_i^\prime$ means
that the two tuples are the same -- have the same ID/lineage -- not that they
have the same content. 

Representing and manipulating lineage is a complex subject. In this
work, since we only accommodate selection and joins the issue is
significantly simpler. The selection leaves lineage unchanged, the
lineage of the result of the join is the union of the lineage of the
matching tuples. Thus, lineage can be represented in relational form
with one attribute for each base relation participating in the
expression. We can thus talk about \emph{lineage schema}
$\mathcal{L}(R)$, a synonym of the set of base relations participating
in the expression of $R$. The lineage of a specific tuple $t\in R$ will
have values for the lineage of all base relations constituting $R$.
A particularly useful notation related to lineage is:
$\ci{t}=\{R_k|t_k = {t_k}^\prime, k\in\lin{R}\}$, the common part of
the lineage of tuples $t$ and $t^\prime$, i.e. the base relations on
which the lineage of $t$ and $t^\prime$ agree.

\begin{figure}[h]\label{fig:Query1}
\centering
\scalebox{0.7}{
\begin{tabular}{@{\extracolsep{0em}}ccc}
\Tree [.\texttt{SUM} [.$\Join$ [.$B_{0.1}$ [.\texttt{l} ] ] [.$WOR_{1000}$ [. \texttt{o} ] ] ] ] & 
\Tree [.\texttt{SUM} [.$\Join$ [.\gus{B} [.\texttt{l} ] ] [.\gus{W} [.\texttt{o} ] ] ] ] & 
\Tree [.\texttt{SUM} [.\gus{BW} [.$\Join$ [.\texttt{l} ] [.\texttt{o} ] ] ] ]\\
(a) & (b) & (c)
\end{tabular}}
\caption{{\small Query 1}}
\end{figure}

\begin{example}\label{ex:baseparam}
 The query from the introduction uses two sampling methods:
 Bernoulli sampling with $p = 0.1$ on \texttt{lineitem} and sampling 1000 tuples without replacement
 from \texttt{orders}(150,000 tuples). These methods can be expressed in terms of GUS as \gus{B} and \gus{W} as follows:
For \gus{B}:
$\agus{B} = 0.1$ and $\bgus{B} = \{\bgt{B}{\varnothing}, \bgt{B}{\texttt{l}}; \bgt{B}{\varnothing} = 0.01, \bgt{B}{\texttt{l}} = 0.1\}$
For \gus{W}:
$\agus{W} = 6.667 \times 10^{-3}$ and $\bgus{W} = \{\bgt{W}{\varnothing}, \bgt{W}{\texttt{o}}; \bgt{W}{\varnothing} = 4.44 \times 10^{-5}, \bgt{W}{\texttt{o}} = 6.667 \times 10^{-3}\}$
\end{example}

It is important to note that the GUS is not an operator but a
quasi-operator. While it corresponds to a real operator when the
translation from specific sampling to GUS happens, it will not
correspond to an operator after transformations. There is no
  need to provide or even to consider  an implementation of a general
  GUS operator since GUS will only be used for the purpose of
  analysis. 

In the rest of this section, we will assume that all specific sampling
operators were replaced by GUS quasi-operators, thus will not be
encountered by the re-writing algorithm.
We designate by $\gus{}(R)$, a GUS method applied 
to a relation $R$, and the resulting sample by \rs{}. When multiple GUS methods are used, 
the $i$'th GUS method and its resulting sample will be denoted by \gus{i} and \rs{i} respectively. 
  
\begin{figure}\label{table:notation}
\centering
\scalebox{1}{
\begin{tabular}{ l | p{.7\columnwidth}}
	Notation & Meaning  \\ \hline
	\rs{} & Random subset of $R$\\
	\agus{} & $P[t \in \rs{}]$\\
	\lin{R} & Lineage schema of $R$\\
	\lin{t} & Lineage of tuple $t$\\
	$T$ & Subset of $\lin{R}$\\
	\ci{t} & $\{R_k|t_k = {t_k}^\prime, k\in\lin{R}\}$\\
	\bt{T} & $P[t, t^\prime \in \rs{}| T = \ci{t}]$\\
	\bgus{} & $\{\bt{T}| T \in \mathcal{P}(n)\}$\\ 
	\gus{} & GUS method with parameters \agus{} and \bgus{}\\
	$\gus{}(R)$ & \gus{} applied to relation $R$ 
\end{tabular}}
\caption{{\small Notation used in paper}}
\end{figure}

\subsection{Interaction Between GUS and Rel Ops}\label{sec:relGUS}

As we stated in Section~\ref{subsec:SOA}, SOA-equivalence is the key
for deriving an \emph{analyzable} plan that is \emph{equivalent} to the one
provided by the user. The results in this section provide equivalences
that allow such transformations that lead to a single, top, GUS
operator. The results in this section make use of the notation in
Table~\ref{table:notation}.

\begin{proposition}[Identity GUS] \label{thm:identity} The
  quasi-operator $\gusi$, i.e. a GUS operator with
  $a=1$, $b_T=1$, can be inserted at any point in a query plan without
  changing the result. 
\end{proposition}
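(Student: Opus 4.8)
The plan is to use the SOA-set equivalence characterization (Proposition~\ref{thm:SOA-P}) rather than argue directly about SUM-aggregates. Concretely, I want to show that for any relational-algebra expression $\mathcal{E}(R)$ and any point in the plan where a subexpression computes some relation $S$, inserting $\gusi$ to obtain $\gusi(S)$ yields $\gusi(S) \SOA S$; then, because SOA-equivalence is a congruence that can be applied locally (it is an equivalence relation by the preceding proposition, and the later commutation results let it propagate through the surrounding operators), the overall plan is unchanged in the SOA sense. So the crux reduces to a single local claim: $\gusi(S) \SOA S$ for every relation $S$.

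First I would unwind the definitions. By Proposition~\ref{thm:SOA-P}, $\gusi(S) \SOA S$ holds iff (i) $P[t \in \gusi(S)] = P[t \in S]$ for all $t$, and (ii) $P[t,u \in \gusi(S)] = P[t,u \in S]$ for all $t,u$. Since $S$ is a deterministic relation, $P[t \in S]$ is $1$ if $t \in S$ and $0$ otherwise, and similarly $P[t,u \in S] \in \{0,1\}$. For the GUS side, by Definition~\ref{def:gus} with $a=1$ we get $P[t \in \rs{}] = a = 1$ for every tuple $t$ that is actually in $S$ (GUS is a randomized filter on the tuples of its input), so the selection keeps every tuple with probability $1$; hence $\gusi(S) = S$ almost surely. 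This makes (i) immediate. For (ii), I would note that $b_T = 1$ for every $T \subseteq \lin{S}$ means that, conditioned on any pattern of lineage agreement, $P[t,u \in \rs{}] = 1$; combined with $a=1$ this forces the filter to retain all pairs deterministically, giving $P[t,u \in \gusi(S)] = 1 = P[t,u \in S]$ whenever $t,u \in S$. The one point that needs a sentence of care is consistency: we should check that the parameter choice $a=1$, $b_T=1$ is actually \emph{realizable} as a GUS method (i.e., that the trivial "keep everything" filter satisfies Definition~\ref{def:gus}), which it visibly does since $P[t \in \rs{}] = 1$ is independent of $t$ and $P[t,t' \in \rs{}] = 1$ depends (vacuously) only on $\ci{t}$.

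The main obstacle is not the probabilistic content — the identity filter is degenerate — but making the \emph{locality} argument rigorous: "inserted at any point in a query plan without changing the result" requires that SOA-equivalence be a congruence with respect to the relational operators under consideration (selection, join) and the GUS quasi-operator, so that replacing a subexpression $S$ by the SOA-equivalent $\gusi(S)$ leaves the whole plan SOA-equivalent. Strictly, $\gusi(S) = S$ pointwise (almost surely, as events), so substituting it changes nothing even before invoking congruence; I would lead with that observation to keep the proof short, and only fall back on the SOA-set-equivalence computation if one wants to avoid reasoning about the sample space directly. I expect the write-up to be three or four lines once the "$a=1,b_T=1$ forces the deterministic keep-all filter" point is stated cleanly.
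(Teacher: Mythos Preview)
Your proposal is correct and converges on exactly the paper's argument: the paper's proof is the single line ``since $a=1$, all input tuples are allowed with probability $1$, i.e., no filtering happens,'' which is precisely the ``$\gusi(S)=S$ almost surely'' observation you plan to lead with. The SOA-set-equivalence route via Proposition~\ref{thm:SOA-P} and the congruence discussion are sound but unnecessary here, as you yourself note; the paper does not bother with them because deterministic equality of the output relation is stronger than SOA-equivalence and makes substitution at any point immediate.
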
 
\begin{proof}
  Since $\agus{} = 1$, all input tuples are allowed with probability 1, i.e., no filtering happens. 
\end{proof}

\begin{proposition} [Selection-GUS Commutativity] \label{thm:select}
For any $R$, selection $\sigma_C$ and GUS $\gus{}$, 
\begin{equation*}
\sigma_C (\gus{}) (R) \stackrel{\mbox{\tiny{SOA}}}{\Longleftrightarrow} \gus{}(\sigma_C(R)). 
\end{equation*}
\end{proposition}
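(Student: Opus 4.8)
The plan is to prove the equivalence through \emph{SOA-set equivalence} (Proposition~\ref{thm:SOA-P}) rather than directly through the definition: it suffices to check that the two expressions $\sigma_C(\gus{})(R)$ and $\gus{}(\sigma_C(R))$ assign the same first-order inclusion probability to every tuple $t$ of the underlying cross-product, and the same second-order inclusion probability to every pair $t,u$. The conceptual reason the proposition holds is that a GUS quasi-operator reacts only to \emph{lineage}, while the selection predicate $C$ is content-based and leaves lineage untouched; so the two sides differ only in the order in which a lineage-blind filter and a content-based filter are applied, and that order is irrelevant.

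First I would handle the first-order probabilities. On the left, $t\in\sigma_C(\gus{})(R)$ holds exactly when $t$ survives the GUS and then passes $C$, so $P[t\in\sigma_C(\gus{})(R)] = \mathbf{1}[C(t)]\cdot\agus{}$. On the right, $\sigma_C(R)$ is simply the sub-collection of tuples of $R$ that satisfy $C$, each carrying its original lineage; a tuple $t$ can lie in $\gus{}(\sigma_C(R))$ only if $t\in\sigma_C(R)$, and conditioned on that it survives the GUS with probability $\agus{}$, since by Definition~\ref{def:gus} the per-tuple inclusion probability of a GUS depends only on lineage and not on which super- or sub-relation the GUS is applied to. Hence $P[t\in\gus{}(\sigma_C(R))] = \mathbf{1}[C(t)]\cdot\agus{}$, matching the left side (both vanish when $C(t)$ fails).

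Next I would treat the second-order probabilities. Fix $t,u\in R$ and let $T$ be the set of base relations on which the lineages of $t$ and $u$ agree. On the left, $t,u\in\sigma_C(\gus{})(R)$ requires $C(t)$, $C(u)$, and $t,u\in\rs{}$, which has probability $\mathbf{1}[C(t)]\,\mathbf{1}[C(u)]\,\bt{T}$. On the right, the selection alters no lineage, so $t$ and $u$ — when present in $\sigma_C(R)$ — agree on exactly the same set $T$ of base relations; therefore the GUS pairwise inclusion probability is again $\bt{T}$, multiplied by the two membership indicators $\mathbf{1}[C(t)]\,\mathbf{1}[C(u)]$. Both sides equal $\mathbf{1}[C(t)]\,\mathbf{1}[C(u)]\,\bt{T}$, and Proposition~\ref{thm:SOA-P} then yields the SOA-equivalence. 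The only delicate point — the hard part, such as it is — is justifying that $\gus{}$ applied to $\sigma_C(R)$ carries the \emph{same} parameters $\agus{}$ and $\bgus{}$ as $\gus{}$ applied to $R$; this is precisely what the lineage-based formulation of Definition~\ref{def:gus} buys us, together with the already-noted fact that selection leaves lineage unchanged, so that the inclusion structure the GUS responds to is identical before and after filtering.
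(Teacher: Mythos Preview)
Your proposal is correct and follows essentially the same route as the paper: both arguments verify matching first- and second-order inclusion probabilities (the paper restricts to $t\in R'=\sigma_C(R)$ while you carry the indicator $\mathbf{1}[C(t)]$ explicitly, a purely cosmetic difference). Your write-up is in fact more explicit than the paper's in invoking Proposition~\ref{thm:SOA-P} and in articulating why the GUS parameters are unchanged under selection, but the underlying idea is identical.
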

\begin{proof}
Let $R^\prime = \sigma_C(R)$. On computing $\rs{} \cap R^\prime$ we see that 
\begin{gather*}
\forall(t \in R^\prime),  P[t \in \rs{} \cap R^\prime]  = P[t \in \rs{}]I_{\{t \in R^\prime\}} = a.
\end{gather*}
\begin{gather*}
\forall(t,t^\prime \in R^\prime), P[t, t^\prime \in \rs{} \cap R^\prime| T \se \ci{t}] \\ = P[t, t^\prime \in \rs{}| T \se \ci{t}] = \bt{T} \qed 
\end{gather*}
\end{proof}

The above results are somewhat expected and have been covered for
particular cases in previous literature. The following result, though,
overcomes the difficulties in \cite{DBLP:conf/sigmod/ChaudhuriMN99}.

\begin{proposition}[Join-GUS Commutativity] \label{thm:join}
For any $R,S$, join $\Join_{\theta}$ and GUS methods $\gus{1},\gus{2}$, 
if $\lin{R_1}\cap \lin{R_2}=\varnothing$
\begin{gather*}
\gus{1}(R_1) \Join_{\theta} \gus{2}(R_2) \SOA \gus{}(R_1 \Join_{\theta} R_2),\\
where,\qquad a = \agus{1}\agus{2},\qquad \bt{T} = \bgt{1}{T_1}\bgt{2}{T_2}
\end{gather*}
\noindent
with $T_1 = T \cap \lin{R_1}$ and $T_2 = T \cap \lin{R_2}$.
\end{proposition}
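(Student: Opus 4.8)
The plan is to prove the join-GUS commutativity using the SOA-set equivalence characterization from Proposition~\ref{thm:SOA-P}, which reduces the claim to verifying agreement on first-order and second-order inclusion probabilities. This avoids reasoning directly about aggregates and is the cleaner route. So I would start by fixing a tuple $t \in R_1 \Join_\theta R_2$; such a tuple has the form $t = (t^{(1)}, t^{(2)})$ with $t^{(1)} \in R_1$, $t^{(2)} \in R_2$ surviving the join predicate $\theta$. Its lineage schema is $\lin{R_1} \cup \lin{R_2}$, which is a disjoint union by the hypothesis $\lin{R_1} \cap \lin{R_2} = \varnothing$. For the left-hand side $\gus{1}(R_1) \Join_\theta \gus{2}(R_2)$, the tuple $t$ appears in the output iff $t^{(1)} \in \rs{1}$ and $t^{(2)} \in \rs{2}$. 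Since the two GUS operators act on disjoint base relations and are applied independently, $P[t \in \gus{1}(R_1) \Join_\theta \gus{2}(R_2)] = P[t^{(1)} \in \rs{1}] \cdot P[t^{(2)} \in \rs{2}] = \agus{1}\agus{2}$, which matches the prescribed $a$ for the right-hand side, where a single GUS with parameter $a = \agus{1}\agus{2}$ is applied to $R_1 \Join_\theta R_2$ and includes $t$ with probability exactly $a$.

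The second-order check is the substantive part. Take two tuples $t = (t^{(1)}, t^{(2)})$ and $t' = (t'^{(1)}, t'^{(2)})$ in $R_1 \Join_\theta R_2$, and let $T = \ci{t} \subseteq \lin{R_1}\cup\lin{R_2}$ be the set of base relations on which their lineages agree. Because lineage is componentwise and the schemas are disjoint, $T$ splits cleanly as $T = T_1 \sqcup T_2$ with $T_1 = T \cap \lin{R_1}$ and $T_2 = T \cap \lin{R_2}$; moreover $T_1$ is exactly the common lineage of $t^{(1)}$ and $t'^{(1)}$ within $R_1$, and similarly $T_2$ for the second components. On the left-hand side, $t$ and $t'$ both appear in the joined sample iff $t^{(1)}, t'^{(1)} \in \rs{1}$ and $t^{(2)}, t'^{(2)} \in \rs{2}$; by independence of the two GUS operators the probability factorizes as $P[t^{(1)}, t'^{(1)} \in \rs{1} \mid T_1 = \ci{t^{(1)}}] \cdot P[t^{(2)}, t'^{(2)} \in \rs{2} \mid T_2 = \ci{t^{(2)}}] = \bgt{1}{T_1}\,\bgt{2}{T_2}$, which is precisely the claimed $\bt{T}$ on the right-hand side. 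Finally I should note the degenerate cases: if $t^{(1)} = t'^{(1)}$ but $t^{(2)} \neq t'^{(2)}$ (or vice versa), or if $t = t'$, the same factorization still goes through because the GUS parameters $\bt{\cdot}$ are defined for every subset of the lineage schema, including the full set (which recovers the first-order probability $a$) and the empty set. I would remark that the constructed right-hand operator is a legitimate GUS quasi-operator: its parameters depend only on the lineage agreement set $T$, so the uniformity requirement in Definition~\ref{def:gus} is met.

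The main obstacle I anticipate is not any single computation but getting the lineage bookkeeping airtight --- specifically, justifying that a result tuple of the join "is" an independent pair as far as the two samplers are concerned, and that the agreement set $T$ of two result tuples decomposes as the disjoint union of the agreement sets of their projections onto $R_1$ and $R_2$. This hinges on two facts that the excerpt has already set up: the lineage of a join result is the union of the lineages of the matching input tuples, and the two base-relation sets are disjoint. With those in hand the decomposition $T = T_1 \sqcup T_2$ is forced, and everything else is the independence factorization. A secondary subtlety worth a sentence is that $\theta$ plays no role beyond determining which pairs $(t^{(1)}, t^{(2)})$ form valid result tuples --- the sampling is on lineage, not content, so the join predicate neither helps nor hinders the probability calculation, exactly as in the selection case of Proposition~\ref{thm:select}.
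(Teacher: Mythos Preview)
Your proposal is correct and follows essentially the same route as the paper: verify that first- and second-order inclusion probabilities agree, using independence of the two GUS processes together with the disjoint-lineage hypothesis to factor both $a$ and $b_T$. The only cosmetic difference is that the paper first reduces $\Join_\theta$ to $\times$ by invoking Proposition~\ref{thm:select} (join $=$ selection after cross product), whereas you work with the join directly and observe at the end that $\theta$ is inert for the probability computation; the underlying calculation is identical.
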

\begin{proof}
  We proved in Proposition~\ref{thm:select} that a GUS method commutes
   with selection. Thus, it is enough to prove commutativity of a GUS method
   with cross product.
 Let $R = R_1 \times R_2$ and $t = (t_1, t_2), t^\prime = (t_1^\prime, t_2^\prime) \in
  R$. Thus, $\lin{R} = \lin{R_1} \cup \lin{R_2}$. We have:
  \begin{align*}
    a & = P[t\in\rs{}] = P[t_1 \in \rs{1} \land t_2 \in \rs{2}]\\
    & = P[t_1 \in \rs{1} \land t_2 \in \rs{2}] = \agus{1}\agus{2}.
  \end{align*}
  Since $\lin{R_1} \cap \lin{R_2} = \varnothing$,  for an arbitrary 
$T \in \lin{R}$, $T_1 = T \cap \lin{R_1}$ and $T_2 = T \cap \lin{R_2}$ 
we have, $T_1 \cap T_2 = \varnothing$ (disjunct lineage). With this, we first get:
\begin{equation*}
\ci{t} \se T \Leftrightarrow \ci{t_1} \se T_1 \land \ci{t_2} \se T_2
\end{equation*}
and then using the above and independence of GUS methods, 
  \begin{align*}
   \bt{T} & = P[t \in \rs{} \land t^\prime \in \rs{}|\ci{t} \se T]\\
	& = P[t_1, t_1^\prime \in \rs{1} \land t_2, t_2^\prime \in \rs{2}|\ci{t_1} \se T_1 \land \ci{t_2} \se T_2]\\
	& = P[t_1, t_1^\prime \in \rs{1}|\ci{t_1} \se T_1] P[t_2, t_2^\prime \in \rs{2}|\ci{t_2} \se T_2]\\
	& = \bgt{1}{T_1}\bgt{2}{T_2}. \qed
  \end{align*}
\end{proof}

\begin{example} \label{ex:join}
Applying the above results to the GUS co-efficients obtained in Example \ref{ex:baseparam}, 
we can derive the following co-efficients for \gus{} in Fig \ref{fig:Query1}:
\begin{gather*}
 \agus{} = \agus{1}\agus{2}= 0.1 \times 6.667 \times 10^{-3} = 6.667 \times 10^{-4}.\\
\bt{\varnothing} = \bgt{1}{\varnothing}\bgt{2}{\varnothing} = 0.01 \times 4.44 \times 10^{-5} = 4.44 \times 10^{-7}.\\
\bt{\texttt{o}} = \bgt{1}{\varnothing}\bgt{2}{\texttt{o}} = 0.01 \times 6.667 \times 10^{-3} = 6.667 \times 10^{-5}.\\
\bt{\texttt{l}} = \bgt{1}{\texttt{l}}\bgt{2}{\varnothing} =0.1 \times 4.44 \times 10^{-5} = 4.44 \times 10^{-6}.\\
\bt{\texttt{lo}} = \bgt{1}{\texttt{l}}\bgt{2}{\texttt{o}} = 0.1 \times 6.667 \times 10^{-3} = 6.667 \times 10^{-4}.
\end{gather*}
\end{example}

\begin{example}\label{ex:large}
  In this example we provide a complete walk-through for a larger
  query plan. The input is the query plan in Figure~\ref{fig:large}.a
  that contains 3 sampling operators, 3 joins and refers to relations
  \texttt{lineitem, orders, customers} and \texttt{part}. To analyze
  such a query, the first step is to re-write the sampling operators
  as GUS quasi-operators \gus{1},\\\gus{2}, \gus{3} as in
  Figure~\ref{fig:large}.b. The second step, shown in
  Figure~\ref{fig:large}.c is to apply Proposition~\ref{thm:join} to
  commute \gus{1} and \gus{2} with the join resulting in \gus{12}
  . This step also shows the application of
  Proposition~\ref{thm:identity} above \texttt{customers}. The next
  step in Figure~\ref{fig:large}.d again uses
  Proposition~\ref{thm:join} to commute \gus{12} and \gusi resulting
  in \gus{121}. Figure~\ref{fig:large}.e shows the final
  transformation that uses the same proposition to get an overall GUS
  method \gus{123} just below the aggregate and on the top of the rest
  of the plan. Theorem~\ref{thm:ev} can now be used to obtain expected
  value and variance of the estimate. Using this and either the normal
  approximation or the Chebychev bounds, we obtain confidence
  intervals for the estimate.

The computed coefficients for the GUS methods involved
  are depicted in Figure~\ref{fig:large}
\end{example}

\begin{figure*}\label{fig:large}
\begin{tabular}{@{\extracolsep{-2em}}cc@{\extracolsep{-0.5em}}ccc} 
\Tree [.\texttt{SUM} [.$\Join$ [.$\Join$ [.$\Join$ [.$B_{0.1}$ [.\texttt{l} ] ] [.$W_{1000}$ [.\texttt{o} ] ] ] [.\texttt{c} ] ] [.$B_{0.5}$ [.\texttt{p} ] ] ] ] & 
\Tree [.\texttt{SUM} [.$\Join$ [.$\Join$ [.$\Join$ [.\gus{1} [.\texttt{l} ] ] [.\gus{2} [.\texttt{o} ] ] ] [.\texttt{c} ] ] [.\gus{3} [.\texttt{p} ] ] ] ] & 
\Tree [.\texttt{SUM} [.$\Join$ [.$\Join$ [.\gus{12} [.$\Join$ [.\texttt{l} ] [.\texttt{o} ] ] ] [.$\gusi$ [.\texttt{c} ] ] ] [.\gus{3} [.\texttt{p} ] ] ] ] & 
\Tree [.\texttt{SUM} [.$\Join$ [.\gus{121} [.$\Join$ [.$\Join$ [.\texttt{l} ] [.\texttt{o} ] ] [.\texttt{c} ] ] ] [.\gus{3} [.\texttt{p} ] ] ] ] & 
\Tree [.\texttt{SUM} [.\gus{123} [.$\Join$ [.$\Join$ [.$\Join$ [.\texttt{l} ] [.\texttt{o} ] ] [.\texttt{c} ] ] [.\texttt{p} ] ] ] ]\\ \\
(a) & (b) & (c) & (d) & (e)
\end{tabular}
\\
\begin{tabular}{ l | p{.8\textwidth}} 
GUS method & Parameters\\ \hline
\gus{1} & $\agus{1} = 0.1, \bgt{1}{\varnothing} = 0.01, \bgt{1}{\texttt{l}} = 0.1$\\
\gus{2} & $\agus{2} = 6.667 \times 10^{-3}, \bgt{2}{\varnothing} = 4.44 \times 10^{-5}, \bgt{2}{\texttt{o}} = 6.667 \times 10^{-3}$\\
\gus{3} & $\agus{3} = 0.5, \bgt{3}{\varnothing} = 0.25, \bgt{3}{\texttt{p}} = 0.5$\\
\gus{12} & $\agus{12} = 6.667 \times 10^{-4}, \bgt{12}{\varnothing} = 4.44 \times 10^{-7}, \bgt{12}{\texttt{o}} = 6.667 \times 10^{-5}, \bgt{12}{\texttt{l}} = 4.44 \times 10^{-6}, \bgt{12}{\texttt{lo}} = 6.667 \times 10^{-4}$\\
\gus{121} & $\agus{121} = 6.667 \times 10^{-4}, \bgt{121}{\varnothing} = 4.44 \times 10^{-7}, \bgt{121}{\texttt{c}} = 4.44 \times 10^{-7}, \bgt{121}{\texttt{o}} = 6.667 \times 10^{-5}, \bgt{121}{\texttt{oc}} = 6.667 \times10^{-5}, \bgt{121}{\texttt{l}} = 4.44 \times 10^{-6}, \bgt{121}{\texttt{lc}} = 4.44 \times 10^{-6}, \bgt{121}{\texttt{lo}} = 6.667 \times 10^{-4}, \bgt{121}{\texttt{loc}} = 6.667 \times 10^{-4}$\\
\gus{123} & $\agus{123} = 3.334 \times 10^{-4}, \bgt{123}{\varnothing} = 1.11 \times 10^{-7}, \bgt{123}{\texttt{p}} = 2.22 \times 10^{-7}, \bgt{123}{\texttt{c}} = 1.11 \times 10^{-7}, \bgt{123}{\texttt{cp}} = 2.22 \times 10^{-7}, \bgt{123}{\texttt{o}} = 1.667 \times 10^{-5}, \bgt{123}{\texttt{op}} = 3.335 \times 10^{-5}, \bgt{123}{\texttt{oc}} = 1.667 \times 10^{-5}, \bgt{123}{\texttt{ocp}} = 3.335 \times10^{-5}, \bgt{123}{\texttt{l}} = 1.11 \times 10^{-6}, \bgt{123}{\texttt{lp}} = 2.22 \times 10^{-6}, \bgt{123}{\texttt{lc}} = 1.11 \times 10^{-6}, \bgt{123}{\texttt{lcp}} = 2.22 \times 10^{-6}, \bgt{123}{\texttt{lo}} = 1.667 \times 10^{-4}, \bgt{123}{\texttt{lop}} = 3.334 \times 10^{-4}, \bgt{123}{\texttt{loc}} = 1.667 \times 10^{-4}, \bgt{123}{\texttt{locp}} = 3.334 \times 10^{-4}$
\end{tabular}
\caption{{\small Transformation of the query plan to allow analysis}}\label{fig:large}
\end{figure*}

\section{Properties of GUS operators}\label{sec:properties}\label{sec:design}

In the previous section we explored the interaction between GUS
operators and relational algebra operators. In this section, we
investigate interactions between GUS operators when applied to the
same data. Intuitively, this will open up avenues for design of
  sampling operators, since it will indicate how to compute GUS
  quasi-operators that correspond to complex sampling schemes.

\begin{proposition}[GUS Union] \label{thm:Un-GUS}
For any expression $R$ and GUS methods \gus{1}, \gus{2},
\begin{gather*}
  \gus{1}(R) \cup \gus{2}(R) \SOA \gus{}(R)\\
  where, \qquad a = a_1 +a_2 -a_1 a_2\\
  b_{T} = 2 a - 1 + (1 - 2 a_1 + b_{1T})(1 - 2 a_2 + b_{2T})
\end{gather*}
\end{proposition}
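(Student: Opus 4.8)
The plan is to prove this via the SOA-set equivalence characterization of Proposition~\ref{thm:SOA-P}. By definition, $\gus{}(R)$ is a GUS sample whose first-order inclusion probability is $a$ and whose second-order inclusion probability for a pair $t,u$ is $b_{T}$ with $T = \mathcal{T}(t,u)$. Hence, rather than reasoning about any notion of randomized set equality, it suffices to show that the sample $\rs{1}\cup\rs{2}$ produced by $\gus{1}(R)\cup\gus{2}(R)$ has exactly these first- and second-order inclusion probabilities. Since $\rs{1},\rs{2}\subseteq R$, the $\cup$ here is an ordinary set union on lineage, so $t\in\rs{1}\cup\rs{2}$ iff $t\in\rs{1}$ or $t\in\rs{2}$; and, exactly as in the proof of Proposition~\ref{thm:join}, I take the two sampling processes $\gus{1},\gus{2}$ to be independent.

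First I would compute the first-order probability. Writing $A_i$ for the event $t\in\rs{i}$, independence gives $P[t\in\rs{1}\cup\rs{2}] = 1 - P[\overline{A_1}]\,P[\overline{A_2}] = 1-(1-a_1)(1-a_2) = a_1+a_2-a_1a_2$, which is the claimed $a$ and is independent of $t$. Next, for a pair $t,u$ with $T=\mathcal{T}(t,u)$, let $B_i$ be the event $u\in\rs{i}$ and apply $P[X\cap Y]=P[X]+P[Y]-P[X\cup Y]$ with $X=A_1\cup A_2$ and $Y=B_1\cup B_2$. Here $P[X]=P[Y]=a$, and $P[X\cup Y] = P[A_1\cup B_1\cup A_2\cup B_2] = 1 - P[\overline{A_1}\cap\overline{B_1}]\,P[\overline{A_2}\cap\overline{B_2}]$ by independence of the two processes. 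Because $\gus{i}$ is a GUS, $P[\overline{A_i}\cap\overline{B_i}] = 1-\bigl(P[A_i]+P[B_i]-P[A_i\cap B_i]\bigr) = 1-2a_i+b_{iT}$, which depends on the pair only through $T$. Substituting yields $P[t,u\in\rs{1}\cup\rs{2}] = 2a-1+(1-2a_1+b_{1T})(1-2a_2+b_{2T})$, matching $b_T$. This simultaneously confirms the result is a legitimate GUS, since both quantities depend on $(t,u)$ only through $\mathcal{T}(t,u)$, and it establishes the SOA-equivalence via Proposition~\ref{thm:SOA-P}.

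I do not expect a genuine obstacle here: once the problem is recast through Proposition~\ref{thm:SOA-P}, it is a two-line inclusion--exclusion. The points that require care are (i) making explicit the independence of $\gus{1}$ and $\gus{2}$ and the set-union semantics of $\cup$ (both consistent with the standing conventions used for the join rule), and (ii) the boundary case $T=\lin{R}$, i.e.\ $t=u$, where the formula must collapse to $b_{\lin{R}}=a$; substituting $b_{i,\lin{R}}=a_i$ into the expression for $b_T$ gives $2a-1+(1-a_1)(1-a_2)=a$, as required.
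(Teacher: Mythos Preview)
The paper does not include a proof of this proposition; it is one of the results deferred to the extended version for space reasons. Your argument is correct and is exactly the natural one in the paper's framework: invoke Proposition~\ref{thm:SOA-P} to reduce SOA-equivalence to matching first- and second-order inclusion probabilities, then compute those for $\rs{1}\cup\rs{2}$ by inclusion--exclusion under the standing assumption that the two GUS processes are independent (the same assumption used in the proof of Proposition~\ref{thm:join}). Your derivation of $a$ and of $b_T$ via $P[X\cap Y]=P[X]+P[Y]-P[X\cup Y]$ and $P[\overline{A_i}\cap\overline{B_i}]=1-2a_i+b_{iT}$ is clean, and the consistency check at $T=\lin{R}$ is a nice touch. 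The only thing worth stating a bit more explicitly is that independence of $\gus{1}$ and $\gus{2}$ is a hypothesis you are importing (the proposition as written does not spell it out), but this is in line with how the paper treats the analogous situation for joins.
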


Union of GUS methods can be very useful when samples are expensive to
acquire, thus there is value in reusing them. If two separate samples
from relation $R$ are available, Proposition~\ref{thm:Un-GUS} provides
a way to combine them.

\begin{proposition}[GUS Compaction] \label{thm:compaction} For any expression $R$, and GUS methods $\gus{1}, \gus{2}$,
\begin{gather*}
  \gus{1}\left(\gus{2}(R)\right) \SOA \gus{}(R),\\
where,\qquad a = \agus{1}\agus{2},\qquad \bt{T} = \bgt{1}{T_1}\bgt{2}{T_2}
\end{gather*}
\end{proposition}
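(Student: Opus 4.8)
The plan is to prove this through SOA-set equivalence: by Proposition~\ref{thm:SOA-P} it suffices to show that the (possibly randomized) sets $\gus{1}(\gus{2}(R))$ and $\gus{}(R)$ induce the same first-order inclusion probabilities $P[t\in\cdot]$ on every $t\in R$, and the same second-order probabilities $P[t,t'\in\cdot]$ on every pair $t,t'\in R$. First I would fix notation: write $\rs{2}=\gus{2}(R)$ and $\rs{}=\gus{1}(\rs{2})$, and observe that sampling is a lineage-preserving filter, so $\rs{2}\subseteq R$ carries the same lineage schema $\lin{R}$ as $R$. Hence both GUS methods live over the \emph{same} lineage schema; the decomposition $T_i=T\cap\lin{R}$ implicit in the statement is just $T_i=T$, so the target parameters to match are $a=\agus{1}\agus{2}$ and $\bt{T}=\bgt{1}{T}\bgt{2}{T}$. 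Throughout I would rely on the ``randomized filter on lineage'' reading of Definition~\ref{def:gus} emphasized in the text right after it, and on the independence of the randomness driving $\gus{1}$ and $\gus{2}$ (the same independence invoked in the proof of Proposition~\ref{thm:join}).

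For the first-order check I would decompose $\{t\in\rs{}\}=\{t\in\rs{2}\}\cap\{t\text{ kept by }\gus{1}\}$; since $\gus{1}$ keeps $t$ with probability $\agus{1}$ irrespective of the rest of its (random) input and independently of $\gus{2}$, conditioning on $\rs{2}$ and taking expectations yields $P[t\in\rs{}]=\agus{2}\agus{1}$. For the second-order check I would fix $t,t'$, condition on $\ci{t}\se T$, write $\{t,t'\in\rs{}\}=\{t,t'\in\rs{2}\}\cap\{t,t'\text{ kept by }\gus{1}\}$, and apply the law of total expectation over $\rs{2}$:
\begin{align*}
&P[t,t'\in\rs{}\mid \ci{t}\se T]\\
&\quad= E\big[\,I_{\{t,t'\in\rs{2}\}}\; P[t,t'\text{ kept by }\gus{1}\mid \rs{2}]\;\big|\;\ci{t}\se T\,\big].
\end{align*}
On the conditioning event the inner probability is $\bgt{1}{\ci{t}}=\bgt{1}{T}$ — it depends only on which lineage components $t$ and $t'$ share, not on the rest of $\rs{2}$ — and it is independent of $\gus{2}$, so the expression factors as $\bgt{1}{T}\,P[t,t'\in\rs{2}\mid \ci{t}\se T]=\bgt{1}{T}\bgt{2}{T}$. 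Since $\gus{}(R)$ with the announced parameters realizes exactly these first- and second-order probabilities, Proposition~\ref{thm:SOA-P} gives $\gus{1}(\gus{2}(R))\SOA\gus{}(R)$.

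The step I expect to be the main obstacle is justifying these factorizations cleanly: that a GUS method acts as a filter whose keep/drop verdict on a tuple — or jointly on a pair of tuples — is a function of lineage alone, so that feeding $\gus{1}$ the \emph{random} set $\rs{2}$ (rather than the fixed relation over which it was introduced) still keeps $t$ with probability $\agus{1}$ and keeps $t,t'$ with conditional probability $\bgt{1}{\ci{t}}$, independently both of the composition of $\rs{2}$ and of $\gus{2}$'s coin flips. This is the informal content of the remark following Definition~\ref{def:gus}, but it has to be stated carefully for the conditioning arguments above to go through; once it is, the law of total expectation does the rest. The overall shape mirrors the cross-product part of the proof of Proposition~\ref{thm:join}; the difference is that there the two lineage schemas are disjoint, whereas here they coincide, which is precisely why the $b$-parameters combine as $\bgt{1}{T}\bgt{2}{T}$ over a single index $T$ rather than over a partition of $T$.
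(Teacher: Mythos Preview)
The paper does not actually include a proof of Proposition~\ref{thm:compaction}; it is stated and then immediately discussed, with the proofs deferred to the extended version. So there is nothing in the paper to compare your argument against line by line.

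That said, your approach is the natural one and is exactly in the spirit of the surrounding proofs (Propositions~\ref{thm:select} and~\ref{thm:join}): reduce to SOA-set equivalence via Proposition~\ref{thm:SOA-P}, then factor the first- and second-order inclusion probabilities using independence of the two GUS randomizations together with the lineage-only nature of the filter. Your observation that here $\lin{\gus{2}(R)}=\lin{R}$, so that $T_1=T_2=T$ and the target is $\bt{T}=\bgt{1}{T}\bgt{2}{T}$, is correct and worth making explicit (the paper's statement leaves $T_1,T_2$ undefined in this proposition, presumably inheriting the notation from Proposition~\ref{thm:join} without adapting it). The one point you flag as delicate---that $\gus{1}$, when fed the random input $\rs{2}$, still keeps $t$ (resp.\ the pair $t,t'$) with probability $\agus{1}$ (resp.\ $\bgt{1}{\ci{t}}$) regardless of the composition of $\rs{2}$---is indeed the crux, and your resolution via conditioning on $\rs{2}$ and invoking the ``randomized filter on lineage'' reading of Definition~\ref{def:gus} plus independence of the two samplers is the right one. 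Nothing is missing.
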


Compaction can be also viewed as intersection. It allows sampling
methods to be \emph{stacked} on top of each other to obtain smaller
samples. We will make use of this in the next section.

Interestingly, union behaves like $+$ with the null element $\gusz$
(the sampling method that blocks everything), the
compaction/intersection behaves like $*$ with the null element
$\gusi$. Overall, the algebraic structure formed is that of a
\emph{semi-ring}, as stated in the following.

\begin{thm}\label{GUS-algebra}
  The GUS operators over any expression $R$,  form a \emph{semiring}
  structure with respect to the union and compaction operations with
  $\gusz$ and $\gusi$ as the null elements, respectively.
\end{thm}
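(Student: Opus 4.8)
The plan is to verify the semiring axioms directly, treating the equivalence classes of GUS quasi-operators under SOA-set equivalence (Proposition~\ref{thm:SOA-P}) as the carrier set, and using the formulas from Proposition~\ref{thm:Un-GUS} (union, our ``$+$'') and Proposition~\ref{thm:compaction} (compaction, our ``$*$''). A GUS quasi-operator over $R$ is completely characterized, up to SOA-equivalence, by its parameters $(a, \{b_T\}_{T\in\mathcal{P}(n)})$, so every axiom reduces to an identity among these parameters. I would organize the verification around the following checklist: (i) $(\text{GUS}/\!\!\SOA, \cup, \gusz)$ is a commutative monoid; (ii) $(\text{GUS}/\!\!\SOA, \circ, \gusi)$ is a monoid; (iii) $\circ$ distributes over $\cup$ on both sides; and (iv) $\gusz$ annihilates under $\circ$, i.e. $\gusz \circ \gus{} \SOA \gusz \SOA \gus{} \circ \gusz$. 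Here $\gusz$ has $a\se0$, $b_T\se0$ for all $T$, and $\gusi$ has $a\se1$, $b_T\se1$ for all $T$.

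First I would dispatch the easy pieces. Commutativity of $\cup$ is immediate from the symmetry of the formulas in Proposition~\ref{thm:Un-GUS} in the indices $1,2$. That $\gusz$ is the identity for $\cup$ follows by substituting $a_2\se0$, $b_{2T}\se0$: the union formula gives $a\se a_1$ and $b_T = 2a_1 - 1 + (1-2a_1+b_{1T})(1) = b_{1T}$, recovering \gus{1}. For $\circ$, associativity and the fact that $\gusi$ ($a\se1$, all $b_T\se1$) is the two-sided identity are direct from $a = a_1 a_2$ and $b_T = b_{1T_1} b_{2T_2}$ once one checks the lineage-restriction bookkeeping $T_1 = T\cap\lin{R_1}$ behaves correctly — but since compaction applies two GUS methods to the \emph{same} expression $R$, we have $\lin{R_1}=\lin{R_2}=\lin{R}$, so $T_1=T_2=T$ and the product rule is simply $a=a_1a_2$, $b_T=b_{1T}b_{2T}$, which is manifestly associative and commutative with unit $\gusi$. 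The annihilator property is equally direct: plugging $a_1\se0$, $b_{1T}\se0$ into the compaction formula yields $a\se0$, $b_T\se0$, i.e. $\gusz$.

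The main obstacle, and the step I would write out carefully, is \textbf{distributivity}: $\gus{1}\circ(\gus{2}\cup\gus{3}) \SOA (\gus{1}\circ\gus{2}) \cup (\gus{1}\circ\gus{3})$. On the $a$-parameters this is the identity $a_1(a_2+a_3-a_2a_3) = a_1a_2 + a_1a_3 - (a_1a_2)(a_1a_3)$, which is \emph{false} in general — the right side has an extra factor of $a_1$ in the product term. This tells me the correct reading of ``semiring'' here must be that the union and compaction formulas genuinely satisfy distributivity when restricted to the GUS parameter space, which happens precisely because valid GUS parameters are not arbitrary: they arise as products/unions of sampling inclusion probabilities and are constrained (e.g.\ the relation between $a$ and the $b_T$). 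So the real work is to identify the invariant — I expect it is cleanest to reparametrize via $\tilde a = 1-a$ (the ``rejection'' probability) and $\tilde b_T = 1 - 2a + b_T$ (note $1-2a_1+b_{1T}$ already appears in Proposition~\ref{thm:Un-GUS}), under which union becomes multiplication ($\tilde a = \tilde a_1 \tilde a_2$, $\tilde b_T = \tilde b_{1T}\tilde b_{2T}$) and $\gusz \leftrightarrow \tilde a=1$ becomes the multiplicative unit. In these coordinates I would then re-express compaction and check that the pair $(\cup,\circ)$ forms a semiring; if distributivity still fails on the nose, the theorem should be understood up to SOA-equivalence on the \emph{sampled relation}, meaning the claimed identities hold for the induced first- and second-order inclusion probabilities on tuples of $R$ (Proposition~\ref{thm:SOA-P}), not as formal identities on parameter tuples. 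I would make that scoping explicit, prove the commutative-monoid and monoid axioms and the annihilator law as above, and devote the bulk of the argument to establishing distributivity in the rejection-probability coordinates, flagging it as the one nonroutine computation.
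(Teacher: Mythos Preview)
The paper does not actually supply a proof of this theorem in the body you were given; it is stated and immediately followed by Proposition~\ref{thm:composition}, with the authors deferring ``all technical proofs'' to the extended version. So there is nothing to compare against, and your proposal has to be judged on its own.

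Your treatment of the two monoid axioms and of the annihilator law is correct and complete: writing $\tilde a = 1-a$ and $\tilde b_T = 1-2a+b_T$ turns the union formulas of Proposition~\ref{thm:Un-GUS} into componentwise products, which immediately gives commutativity, associativity, and the identity $\gusz$ (where $\tilde a = 1$, $\tilde b_T = 1$). Likewise the compaction formulas $a = a_1 a_2$, $b_T = b_{1T} b_{2T}$ are already componentwise products, so $(\text{GUS},\circ,\gusi)$ is a commutative monoid and $\gusz$ is absorbing. All of that is fine.

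The genuine problem is the one you yourself flag: \emph{distributivity does not hold} for the parameter formulas as stated, and neither of your proposed rescues works. Your reparametrization $(\tilde a,\tilde b_T)$ merely swaps the roles of the two operations --- in those coordinates union becomes multiplication and compaction becomes $\tilde a = \tilde a_1 + \tilde a_2 - \tilde a_1 \tilde a_2$ --- so you recover the exact same obstruction with the labels exchanged. And appealing to SOA-equivalence on the sampled relation does not help either: Propositions~\ref{thm:Un-GUS} and~\ref{thm:compaction} are derived under the assumption that the GUS operators being combined are \emph{independent}, so in $(\gus{1}\circ\gus{2})\cup(\gus{1}\circ\gus{3})$ the two occurrences of $\gus{1}$ must be independent draws, which is exactly what produces the extra $a_1$ in $a_1^2 a_2 a_3$. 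If instead you reuse the same realization of $\gus{1}$ on both sides, distributivity holds set-theoretically, but then you are no longer working in the algebra whose operations are defined by those two propositions. In short, you have correctly located a gap in the theorem as literally stated; what remains is not a missing computation but a missing \emph{definition} --- the paper would need to specify a notion of ``same GUS'' under which reuse is permitted and re-derive the combination rules accordingly, and nothing in the visible text does so.
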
 

The semi-ring structure of GUS methods can be exploited to design
sampling operators from ingredients. 

\begin{proposition}[GUS Composition]\label{thm:composition} 
  For any expressions $R_1$, $R_2$ and \gus{1}, \gus{2}, 
\begin{gather*}
  \gus{1}(R_1) \circ \gus{2}(R_2) \SOA \gus{}(R)\\
  \agus = \agus{1}\agus{2}, \qquad \bt{T} = \bgt{1}{T}\bgt{2}{T}
\end{gather*}
\end{proposition}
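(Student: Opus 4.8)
The plan is to prove this the same way Propositions~\ref{thm:join} and~\ref{thm:compaction} were proved: compute the first- and second-order tuple inclusion probabilities of $\gus{1}(R_1)\circ\gus{2}(R_2)$ and invoke Proposition~\ref{thm:SOA-P} (SOA-set equivalence). Concretely, writing $R=R_1\circ R_2$, it is enough to show that $P[t\in\gus{1}(R_1)\circ\gus{2}(R_2)]$ and $P[t,t'\in\gus{1}(R_1)\circ\gus{2}(R_2)\mid \ci{t}=T]$ match the corresponding probabilities for a single GUS method $\gus{}(R)$ with $\agus{}=\agus{1}\agus{2}$ and $\bt{T}=\bgt{1}{T}\bgt{2}{T}$. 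Throughout I would use the standing assumptions that distinct GUS quasi-operators filter independently and that each GUS filter acts on lineage, not content (Definition~\ref{def:gus}).

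First I would make precise how $\circ$ propagates lineage: a tuple $t\in R$ is built from a constituent $t_1$ of $R_1$ and a constituent $t_2$ of $R_2$, and $t\in\gus{1}(R_1)\circ\gus{2}(R_2)$ holds exactly when $t_1\in\gus{1}(R_1)$ and $t_2\in\gus{2}(R_2)$. Since the two filters are independent and each has tuple-independent acceptance probability, $P[t\in\gus{1}(R_1)\circ\gus{2}(R_2)]=P[t_1\in\gus{1}(R_1)]\,P[t_2\in\gus{2}(R_2)]=\agus{1}\agus{2}$, which is the claimed $\agus{}$ and does not depend on $t$, as a GUS first-order parameter must. For the second-order term, the key step --- exactly as in Proposition~\ref{thm:join} --- is the lineage biconditional: $\ci{t}=T$ inside $\lin{R}$ if and only if $\ci{t_1}=T$ and $\ci{t_2}=T$ for the constituents (no splitting of $T$ is needed here, in contrast with the join case). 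Conditioning on $\ci{t}=T$ and using independence of $\gus{1}$ and $\gus{2}$ then factors the probability as $P[t_1,t_1'\in\gus{1}(R_1)\mid \ci{t_1}=T]\,P[t_2,t_2'\in\gus{2}(R_2)\mid \ci{t_2}=T]=\bgt{1}{T}\,\bgt{2}{T}$, matching the claimed $\bt{T}$. With both probabilities in agreement, Proposition~\ref{thm:SOA-P} yields the SOA-equivalence.

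I expect the main obstacle to be the lineage bookkeeping rather than any probabilistic subtlety: one must (i) fix the precise reading of $\circ$ on the lineage schemas $\lin{R_1}$, $\lin{R_2}$, $\lin{R}$ and check that it really does make ``$t$ and $t'$ share exactly the lineage $T$'' equivalent to each constituent pair sharing exactly $T$, and (ii) verify that the conditioning event $\ci{t}=T$ depends only on lineage, so that it does not couple the two independent sampling processes --- without this, the factorization in the second-order step would fail. Once these two points are settled, the rest is a direct substitution of the parameters from Definition~\ref{def:gus}, and the SOA-set characterization closes the proof.
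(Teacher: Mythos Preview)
The paper does not actually include a proof of Proposition~\ref{thm:composition}; it is one of the results deferred to the extended version. That said, your plan mirrors exactly the pattern the paper uses for the results it \emph{does} prove (Propositions~\ref{thm:select} and~\ref{thm:join}): verify the first- and second-order inclusion probabilities and invoke the SOA-set characterization in Proposition~\ref{thm:SOA-P}. Your identification of independence of the two GUS filters as the driver of the factorization, and of the lineage biconditional as the only non-mechanical step, is on target and consistent with how the paper argues elsewhere.

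One caution on the lineage bookkeeping you flag as obstacle~(i): the paper is loose about what $\bgt{i}{T}$ means when $T\not\subseteq\lin{R_i}$. If you chase the worked example that follows the proposition (the bi-dimensional Bernoulli built from $B_{0.2}(\texttt{l})$ and $B_{0.3}(\texttt{o})$), the numbers only come out if one reads $\bgt{i}{T}$ as $\bgt{i}{\,T\cap\lin{R_i}}$, i.e.\ exactly the join-style splitting you say is \emph{not} needed here. So your lineage biconditional ``$\ci{t}=T$ iff $\ci{t_1}=T$ and $\ci{t_2}=T$ with the same $T$'' should really be stated as agreement on the respective restrictions $T\cap\lin{R_i}$, after which the argument proceeds as you wrote. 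This is a notational wrinkle in the statement rather than a flaw in your strategy.
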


GUS concatenation is very useful for design of multi-\\dimensional
sampling operators. We use it here to design a bi-dimensional
Bernoulli.

\begin{example}
  Suppose that we designed a bi-dimensional sampling operator
  $B_{0.2,0.3}(\texttt{l,o}) $ that combines Bernoulli sampling operators
  $B_{0.2}(\texttt{l})$ and $B_{0.3}(\texttt{o})$. Using the above result, the GUS
  operator $\gus{}$ corresponding to the bi-dimensional Bernoulli is
  $\gus{1}(\texttt{l})\circ\gus{2}\texttt{o}$, where $\gus{1}$ is the GUS of $B_{0.2}(\texttt{l})$
  and $\gus{2}$ is the GUS of $B_{0.3}(\texttt{o})$. Working out the
  coefficients using Proposition~\ref{thm:composition} - the process
  is similar to the process in Example~\ref{ex:join} we get:
$\agus{3} = 0.06$, $\bgt{3}{\varnothing} = 0.0036$, $\bgt{3}{\texttt{o}} = 0.012$, $\bgt{3}{\texttt{l}} = 0.018$, $\bgt{3}{\texttt{lo}} = 0.06$
\end{example}

\section{Implementation Considerations}\label{sec:system}\label{implement}

In this section we carefully investigate how the theoretical ideas in
the previous section can be used to add confidence interval
capabilities to existing and future database systems for aggregate
\texttt{'SELECT-FROM-WHERE'} queries.  The main pitfall we are trying
to avoid is the need to \emph{re-design} the query processing engine.
This is the main reason the \emph{online aggregation} type of work
(ripple joins\cite{Haas_ripplejoins}, DBO\cite{dbo}) did not have much industry impact.

As we will see in this section and the further refinement in
Section~\ref{sec:better}, the solution we propose (a) will work with
existing and future sampling methods/operators - the only requirement
is that they are expressible as GUS operators, (b) the analysis is easy
to integrate with existing query optimizers or as a separate tool,
(c) there is no significant restriction on the query plan - the
optimizer is not hindered in the search for a good execution plan, (d)
the analysis needs minimal extra information, and (e) the estimation
process can be confined to a single module that works like a
\emph{black box}.

Our solution, exemplified for Query 1 in Example~\ref{ex:query},  is depicted in
Figure~\ref{fig:architecture}. All the work is performed by the
statistical estimator, denoted as the SBox component, that is
interspersed between the main query plan and the aggregate
computation. The only information the SBox needs is the \emph{lineage}
and the value of the aggregate for each tuple consumed by the
aggregate. Since the SBox needs to perform the aggregation in any
case, the aggregate operator can be omitted; the SBox will provide the
entire result. An enhancement of this solution, that removes the need
to funnel all tuples to the SBox, is explored in
Section~\ref{sec:better}. We use Query 1 in Example~\ref{ex:query} to
make the entire process concrete; this allows us to express the
required computations as SQL statements, which are more familiar and
easier to understand. There is nothing special about Query 1. This
approach works for \emph{any} query supported by our theory with the
appropriate changes to the computation.


There are three tasks that need to be performed by the SBox: use the
query plan and transformations in Section~\ref{sec:analysis} to
compute the coefficients of the top GUS operator, estimate the
coefficients $y_S$ from the samples and perform the final expected
value and variance estimate, and confidence interval computation. We
discuss each part below.

\subsection{Computation of the SOA-equivalent plan}

Given a description of the plan the execution engine will run, and the plan
that uses sampling operators, the theory in Section~\ref{sec:analysis}
is used to compute a SOA-equivalent plan that has a single GUS
operator below the aggregate. The goal of this step is to compute the
coefficients of this GUS operator. No other information is needed for
other parts. This process starts by computing the GUS operators that
correspond to the sampling operators using the technique in
Section~\ref{sec:gus-op} -- this is a simple instantiation process
using Table~\ref{tbl:bwor}. Then, the GUS operators are \emph{pushed up}
the query tree using the transformation rules in
Section~\ref{sec:relGUS}. With careful implementation, this process
need not take more than a few milliseconds even for plans involving 10
relations. At the end of the process, $\gus{}$ is computed. From its
coefficients, using the formula in Theorem~\ref{thm:ev}, the
coefficients $c_S$ are computed with the formula:
\begin{equation*}
  c_S  = \sum_{T \in \mathcal{P}(n)} \left(-1 \right)^{|T| + |S|} b_T
\end{equation*}

\subsection{Lineage information}

As mentioned in Section~\ref{sec:gus-op} the GUS operators require
\emph{lineage} information to express the computation. In the
estimation process, the lineage needs to be made available to the
SBox. In general, adding lineage to databases is a non-trivial issue
\cite{provenanceindb}. Luckily, for GUS operators, we need only a restrictive
version: the lineage of each tuple in a base table is an
ID, the lineage of an intermediate tuple, is the list of IDs
for each base relation tuple that participated. Since we can only
accommodate selection and joins in this work, the lineage of the result
of a join is the concatenation of the lineage of the arguments. In practice, 
all there is needed is to carry IDs of tuples through the query plan and 
make them available, together with the aggregate, to the SBox. For our 
running example, Query 1, this means that the SBox gets the result of the SQL query:
\begin{verbatim}
  CREATE TABLE samples AS
  SELECT l_orderkey*10+l_linenumber as l, 
     o_orderkey as o, l_discount*(1.0-l_tax) as f
  FROM lineitem TABLESAMPLE (10 PERCENT), 
     orders TABLESAMPLE(1000 ROWS)
  WHERE l_orderkey  = o_orderkey AND 
     l_extendedprice > 100.0;
\end{verbatim}
The IDs of tuples need to be unique for each tuple in a base
relation. If the database engine maintains row ids internally, they
can be used. If not, as is the case here, the attributes forming
the primary key can be used to compute an ID, either through some
computation -- this is the case for \texttt{lineitem} above, or
through the application of a hash function with a large domain. As
required by the theory, the only operation the SBox is allowed to
perform is comparison of IDs, thus any one-to-one mapping suffices.

In some systems, the extra lineage information might add significant
overhead. Section~\ref{sec:better} deals with the issue and allows
further improvement.

\subsection{Estimating terms $y_S$}

The computation of the variance of the sampling estimator in
Theorem~\ref{thm:ev} uses the coefficients $y_S$ defined as:
\begin{equation*}
  y_S = \sum_{t_i \in R_i| i \in S} \left(\sum_{t_j \in R_j| j \in S^C}f({t_i, t_j})\right)^2.
\end{equation*}
 The terms $y_S$ essentially requires a \emph{group by lineage}
 followed by a specific computation. This is better understood through
 an example -- Query 1 -- and equivalent expressions in SQL:
\begin{verbatim}
  CREATE TABLE unagg AS
  SELECT  l_orderkey*10+l_linenumber as l, 
     o_orderkey as o, l_discount*(1.0-l_tax) as f
  FROM lineitem TABLESAMPLE (10 PERCENT), 
     orders TABLESAMPLE(1000 ROWS)
  WHERE l_orderkey  = o_orderkey AND 
     l_extendedprice > 100.0;

  SELECT sum(f)^2 as y_empy FROM unagg;

  SELECT sum(F*F) as y_l FROM ( SELECT sum(f) as F 
         FROM unagg GROUP BY l);

  SELECT sum(F*F) as y_o FROM ( SELECT sum(f) as F 
         FROM unagg GROUP BY o);

  SELECT sum(f*f) as y_lo FROM unagg; 
\end{verbatim}

The computation of the $y_S$ terms using the above code is harder than
the evaluation of the exact query, thus resulting in an impractical
solution. We can use the sample provided to the SBox to \emph{estimate
  these terms} by essentially replacing \texttt{unagg} above with
\texttt{samples}. These estimates, $Y_S$ can be used to obtain
unbiased estimates $\hat{Y}_S$ of terms $y_S$ using the
formula\cite{DBLP:journals/tods/JermaineAPD08}
\begin{equation*}
 \hat{Y}_S = \frac{1}{c_{S,\emptyset}} \left(Y_S - \sum_{T \subset S^C, T\neq\emptyset} c_{S,T} \hat{Y}_{S\cup T} \right)
\end{equation*}
where 
\begin{equation*}
  c_{S,T} =  \sum_{U \subset T} \left(-1 \right)^{|U| + |S|} b_{S\cup U}.
\end{equation*}
Note that the major effort is in evaluating $Y_S$ terms
over the sample - the rest of the computation only depends on the
number of the relations.

\subsection{Confidence interval computation}

Once the $\hat{Y}_S$ estimates of $y_S$ are computed, the variance
formula in Theorem~\ref{thm:ev} can be used. In particular, the
estimate of variance of the sampling estimate is:
\begin{equation*}
  \hat{\sigma^2} = \sum_{S \subset \{1:n\}}\frac{c_S}{a^2}\hat{Y}_S - \hat{Y}_{\phi}.
\end{equation*}
To produce actual confidence intervals, we can use one of the
following techniques:

{\bf Optimistic confidence intervals} In most circumstances the
distribution of the sampling estimate is very close to \emph{normal
 distribution}. The techniques in this paper allow the computation of
estimates of the expected value $\hat{\mu}$ and variance
$\hat{\sigma^2}$. The concrete formula for a $95$ percent confidence interval is:
$$[\hat{\mu} - 1.96\hat{\sigma}, \hat{\mu} + 1.96\hat{\sigma}]$$

{\bf Pessimistic Chebychev confidence intervals} If the 
normality of the distribution of sampling estimate is doubtful, the
Chebychev bound can be used to provide 95 percent confidence interval using:
$$[\hat{\mu} - 4.47\hat{\sigma}, \hat{\mu} + 4.47\hat{\sigma}]$$

The Chebychev confidence intervals are correct for any distribution,
at the expense of a factor of 2 in width.

\section{Efficient Variance Estimator}\label{sec:better}

As we explained in Section~\ref{sec:system}, the estimator of the true
result, the expected value of the sampling estimator, does not require
any lineage information. It is simply a scaled version of the result
of the query containing sampling. When it comes to the variance
estimate, there are two main concerns when the number of result tuples
before aggregation is large: (a) the number of terms to be evaluated
is $2^n$ where $n$ is the number of base relations, and (b) ea6ch term
consists of a \texttt{GROUP BY} query that is possibly expensive. In
this section we address these problems using the extension of the
base theory in Section~\ref{sec:design}. 

We start by making an observation about the computation of the
variance of the sampling estimator: it depends, in orthogonal ways, on
properties of the data through terms $y_S$ and on properties of the
sampling through $c_S$. The base theory does not require any
particular way to compute/estimate terms $y_S$. \emph{Using the
  available sample for estimating $y_S$ terms is only one of the
  possibilities.} While many ways to estimate terms $y_S$ can be
explored, a particularly interesting one in this context is to use
\emph{another sampling method} for the purpose. More specifically, we
could use a sample of the available sample for estimation of the terms
$y_S$ and the full sample for the estimation of the true value. This
process is depicted in Figure~\ref{fig:better}. 


To understand what benefits we can get from this idea, we observe that
we do not need very precise estimates of the terms $y_S$. Should we
make a mistake, it will only affect the confidence interval by a small
constant factor but will still allow the shrinking of the confidence
interval with the increase of the sample. Based on the experience in
DBO and TurboDBO, using 10000 result tuples for the estimation of
$y_S$ terms suffices. This means that the $2^n$ $y_S$ terms are
evaluated, as explained in Section~\ref{sec:system} only on datasets
of size at around 10000. Moreover, only for these 10000 samples the
system needs to provide lineage information; samples used for
evaluation of the expected value need no lineage.

There are two alternatives when it comes to reducing the number of
samples used for estimation of terms $y_S$: select a more restrictive
sampling method, depicted in Figure~\ref{fig:better}, or further
sample from the provided sample. The later approach can be applied when
needed in case in which the size of the sample is overwhelming for the
computation of terms $y_S$. Specifically, we can use a
multi-dimensional Bernoulli GUS on top of the existing query plan for 
result tuples. This can be obtained by applying Proposition~\ref{thm:composition} 
until desired dimension is reached. The extra results in Section~\ref{sec:properties} 
together with the core results in Section~\ref{sec:analysis} provide the means to analyze
this modified sampling process. Example~\ref{ex:bern} and the
accompanying Figure~\ref{fig:bern} provide such analysis for Query 1
and exemplifies how the extra Bernoulli sampling can be dealt
with. 
\begin{example}\label{ex:bern}
This example shows how the query plan for Query 1 can be sampled
further to efficiently obtain $y_S$ terms. Figure~\ref{fig:bern}.a shows the original query plan.
Figure~\ref{fig:bern}.b shows the sampling in terms of a GUS quasi-operator. Figure~\ref{fig:bern}.c 
shows the placement of a bi-dimensional Bernoulli sampling method. Figures~\ref{fig:bern}.d, ~\ref{fig:bern}.e, ~\ref{fig:bern}.f make use of propositions in Section~\ref{sec:analysis} to obtain 
a SOA-equivalent plan, suitable for analysis.
\end{example}

\begin{figure*}[tbhp]
\centering
\begin{tabular}{@{\extracolsep{0.001em}}cc|c@{\extracolsep{-0.5em}}ccc}
\Tree [.\texttt{SUM} [.$\Join$ [.$B_{0.1}$ [.\texttt{litem} ] ] [.$WOR_{1000}$ [. \texttt{ord} ] ] ] ] &
\Tree [.\texttt{SUM} [.\gus{BW} [.$\Join$ [.\texttt{litem} ]
[.\texttt{ord} ] ] ] ] &
\Tree [.\texttt{SUM} [.$B_{0.2,0.3}$ [.$\Join$ [.$B_{0.1}$
[.\texttt{l} ] ] [.$WOR_{1000}$ [.\texttt{o} ] ] ] ] ] &
\Tree [.\texttt{SUM} [.\gus{3} [.$\Join$ [.\gus{1}
[.\texttt{l} ] ] [.\gus{2} [.\texttt{o} ] ] ] ] ] & 
\Tree [.\texttt{SUM} [.\gus{3} [.\gus{12} [.$\Join$ [.\texttt{l} ] [.\texttt{o} ] ] ] ] ]&
\Tree [.\texttt{SUM} [.\gus{123} [.$\Join$ [.\texttt{l} ]
[.\texttt{o} ] ] ] ]\\
 & & & & & \\
(a) & (b) & (c) & (d) & (e) & (f)
\end{tabular}

\vspace{\baselineskip}
\begin{tabular}{ l | p{.8\textwidth}}
	GUS method & Parameters\\ \hline
	\gus{1} & $\agus{1} = 0.1, \bgt{1}{\varnothing} = 0.01, \bgt{1}{\texttt{l}} = 0.1$\\
	\gus{2} & $\agus{2} = 6.667 \times 10^{-3}, \bgt{2}{\varnothing} = 4.44 \times 10^{-5}, \bgt{2}{\texttt{o}} = 6.667 \times 10^{-3}$\\
	\gus{3} & $\agus{3} = 0.06, \bgt{3}{\varnothing} = 0.0036, \bgt{3}{\texttt{o}} = 0.012, \bgt{3}{\texttt{l}} = 0.018, \bgt{3}{\texttt{lo}} = 0.06$\\ 
	\gus{12} & $\agus{12} = 6.667 \times 10^{-4}, \bgt{12}{\varnothing} = 4.44 \times 10^{-7}, \bgt{12}{\texttt{o}} = 6.667 \times 10^{-5}, \bgt{12}{\texttt{l}} = 4.44 \times 10^{-6}, \bgt{12}{\texttt{lo}} = 6.667 \times 10^{-4}$\\
	\gus{123} & $\agus{123} = 4 \times 10^{-5}, \bgt{123}{\varnothing} = 1.598 \times 10^{-9}, \bgt{123}{\texttt{o}} = 8 \times 10^{-7}, \bgt{123}{\texttt{l}} = 7.992 \times 10^{-8}, \bgt{123}{\texttt{lo}} = 4 \times 10^{-5}$ 
\end{tabular}
\caption{Transformation of the query plan to allow analysis} \label{fig:bern}
\end{figure*}

For the estimation process to be correct, the Bernoulli sampling out
of the sample computed by the query plan needs some care in
implementation. The main issue is the fact that it has to be a GUS
method - if it decides to eliminate a tuple from a base relation, it
has to do so in all result tuples in which it appears. This can be
easily achieved efficiently and with little space using pseudo-random
functions that combine seeds and lineage to provide a [0,1]
number. The pseudo-randomness ensures that the value of the function
will return the same value for the same tuple, thus providing the same
decision. The only memory required to run such a sub-sampling algorithm
is minimal: one seed per base relation. The process is also very
efficient since it only requires evaluation of simple functions.

\section{Conclusions and Future Work}\label{sec:conclusion}

While technically challenging to create, the theory in this paper is in essence easy to use. Sampling is treated as a quasi-operator. In order to incorporate sampling based approximations, such operators are introduced in the query plans and the mechanisms described in Section~\ref{sec:analysis} are used to analyze the estimators. We have already seen an example of use of the theory: the sub-sampling technique in Section~\ref{sec:implementation}. With very little effort (introducing a final Bernoulli sampling quasy operator), we dealt with a seemingly hard problem: how to use a subsample to predict the behavior of the main sample. The straightforwardness of this process encourages us to suggest that the theory presented here will allow significant progress in a number of hard to solve problems explored in the approximate query processing literature. We briefly mention such potential in the remaining of this section.

\vspace{5 pt}
\noindent
{\bf Database as a sample}. By viewing the database itself as a sample, \emph{robustness analysis} is possible. In particular, if  we assume that 1\% of the tuples are mistakenly lost and we wish to predict the impact on the query results we can view the database as a 99\% Bernoulli sample. A large variance will indicate that the query results are sensitive to such perturbations and thus not robust.

\vspace{5 pt}
\noindent
{\bf Choosing sampling parameters}. By using the unbiased $y_S$ estimates from a single sampling instance, the theory allows for plugging in co-efficients for different sampling strategies to predict the respective variances. This can give the user insight on comparing diffrent sampling strategies and parameters to suit his/her needs.

\vspace{5 pt}
\noindent
{\bf Estimating the size of intermediate relations}. Query execution engines maintain a sample of the data and evaluate aggregates on it to predict the size of the intermediate relations. Our theory allows for the evaluation of the precision of these, thereby preventing the selection of \emph{inferior} plans.  

\vspace{5 pt}
\noindent
{\bf Data Streaming and Load Shedding}. An interesting problem in load shedding is determing a sampling rate so that the the system can keep up with fast-rate incoming data while minimizing the error\cite{DBLP:series/ads/BabcockDM07}. While such analysis was done for single relations, our theory provides for similar analyis with multiple relations.


\section{Discussion future directions}\label{sec:disc}

In this section we briefly mention possible theoretical developments
that use the current work as a starting point. 

\paragraph*{Random sets} Proposition~\ref{thm:SOA-P} establishes a
connection between SOA-equivalence and an equivalence relation on
random sets: two random sets are equivalent if they agree on
probabilities of inclusions of tuples and pairs of tuples. This
equivalence is a relaxed version of the equivalence relationship that
would require the \emph{same distribution}. When SUM-like aggregates
are computed over random sets, this relationship between random sets
is the same as SOA-equivalence, as stated in
Proposition~\ref{thm:SOA-P}. An interesting question to explore is
what other \emph{properties} are preserved by this equivalence. This
might prove to be the key to tackling non-aggregate queries in the
presence of randomization.

\paragraph*{Extending randomized filtering.} In Section~\ref{sec:gus}
we saw that GUS is required to be a \emph{filter}. This means that
sampling methods that produce duplicates, like sampling with
replacement, are not accommodated. More importantly, the filter
behavior required the removal of duplicates in
Proposition~\ref{thm:Un-GUS}, which is potentially costly. We believe
the current theory can be extended naturally to allow and capture such
duplicates and still maintain the current elegance.

\paragraph*{Average and non-linear combinations of SUM-like
  aggregates}. Our theory works only for SUM-like aggregates. For
non-linear combinations of such aggregates such as \texttt{AVERAGE},
it is not possible to compute exactly the moment but there is a good
chance that good quality approximations can be provided, using for
example the \emph{delta method}\cite{stat-book}. An interesting
question is whether the structure in this paper can be extended for
such approximation, especially since the Taylor expansions used in the
delta method have the same linear structure. 

\paragraph*{Dealing with Self-Joins:} Proposition~\ref{thm:join}
requires no overlap in the lineage of the two arguments of the
join. This effectively bans self-joins. Self-joins introduce
difficulties for probabilistic analysis - this happens for
probabilistic databases- since the probabilistic event
specifying the presence of a tuple is used twice in the
estimation. This creates extra dependencies that are not fully
captured by inclusion probabilities of tuples and pairs of tuples as
is the case for GUS. An interesting question is whether more but a
finite extra inclusion probabilities are enough to deal with the
problem. We conjecture that inclusion probabilities of combinations of
4 tuples is sufficient.

\paragraph*{Dealing with \texttt{DISTINCT}} The GUS family of sampling
methods is not general enough to commute with distinct -- counter
examples can be readily build. The main problem is the fact that the
DISTINCT needs more information than the interaction between two
tuples, even for the computation of expected value. We believe that
there is a deep connection between this problem and the \emph{safe
  plans} in probabilistic databases. An interesting future development 
would be identifying a more general sampling class
then GUS and establishing a connection with the safe plans. 



\appendix

\section{SOA-equivalence: proof}
{\it Proof} Suppose $\mathcal{E}(R) \SOA \mathcal{F}(R)$. For every $t \in R$, define 
the function $f_t$ as $f_t (s) = 1_{\{s = t\}}$. Hence, $\mathcal{A}_{f_t} (S) = 
1_{\{t \in S\}}$. It follows that 
\begin{eqnarray*}
P(t \in \mathcal{E} (R)) 
&=& E \left[ \mathcal{A}_{f_t}( \mathcal{E}(R) ) \right]\\
&=& E \left[ \mathcal{A}_{f_t} ( \mathcal{F}(R) ) \right]\\
&=& P(t \in \mathcal{F} (R)). \label{eq1}
\end{eqnarray*}

\noindent
Now, for every $t,t' in R$, define the function $f_{t,t'}$ as $f_{t,t'} (s) = 
1_{\{s = t\}} + 1_{\{s = t'\}}$. It follows that 
\begin{eqnarray*}
& & E \left[ \mathcal{A}_{f_t} (\mathcal{E}(R))^2 \right]\\
&=& E \left[ \left( 1_{\{t \in \mathcal{E}(R)\}} + 1_{\{t' \in \mathcal{E}(R)\}} 
\right)^2 \right]\\
&=& E \left[ 1_{\{t \in \mathcal{E} (R)\}} + 1_{\{t' \in \mathcal{E} (R)\}} + 
2 1_{\{t,t' \in \mathcal{E} (R)\}} \right]\\
&=& P(t \in \mathcal{E} (R)) + P(t' \in \mathcal{E} (R)) + 
2 P(t,t' \in \mathcal{E} (R)). 
\end{eqnarray*}

Similarly, 
\begin{eqnarray*}
& & E \left[ \mathcal{A}_{f_t} (\mathcal{F} (R))^2 \right]\\
&=& P(t \in \mathcal{F} (R)) + P(t' \in \mathcal{F} (R)) + 
2 P(t,t' \in \mathcal{F} (R)). 
\end{eqnarray*}

\noindent
Note that 
$$
E \left[ \mathcal{A}_{f_t} (\mathcal{E} (R))^2 \right] = E \left[ 
\mathcal{A}_{f_t} (\mathcal{F} (R))^2 \right]. 
$$

\noindent
It follows by (\ref{eq1}) that 
$$
P(t,t' \in \mathcal{E} (R)) = P(t,t' \in \mathcal{F} (R)). 
$$

Hence, one direction of the equivalence is proved. Let us now assume 
that 
$$
P(t \in \mathcal{E} (R)) = P(t \in \mathcal{F} (R)) \; \forall t \in R, 
$$

\noindent
and 
$$
P(t,t' \in \mathcal{E} (R)) = P(t,t' \in \mathcal{F} (R)) \; \forall t,t' \in R. 
$$

\noindent
The SOA-equivalence of $\mathcal{E} (R)$ and $\mathcal{F} (R)$ immediately follows 
by noting that for an arbitray function $f$ on $R$, and an arbitrary 
(possibly randomized) expression $S(R)$ 
$$
E \left[ \mathcal{A}_f (S(R)) \right] = \sum_{t \in R} P(t \in S(R) f(t), 
$$

\noindent
and 
$$
E \left[ \mathcal{A}_f (S(R))^2 \right] = \sum_{t,t' \in R} P(t,t' \in S(R) f(t) f(t'). 
$$

\hfill$\Box$ 

\bibliographystyle{abbrv}
\bibliography{supriya}  

\end{document}